\newtheorem{assumption}{Assumption}
\newtheorem{lemma}{Lemma}
\newtheorem{definition}{Definition}
\newtheorem{theorem}{Theorem}
\newtheorem{remark}{Remark}
\begin{document}
\title{Fixed-Time Cooperative Tracking Control for Double-Integrator Multi-Agent Systems: A Time-Based Generator Approach}
%
%
%
%


\author{
Qiang Chen, Yu~Zhao, Guanghui Wen, Guoqing Shi and Xinghuo Yu
\thanks{Q. Chen is with the School of Electronics and Information and the school of Automation, Northwestern Polytechnical
University, Xi'an 710129, China (e-mail: chenq@mail.nwpu.edu.cn).}
%
%
\thanks{Y. Zhao is with the School of Automation, Northwestern Polytechnical
University, Xi'an 710129, China
(Corresponding author).}
\thanks{G. Wen is with the School of Mathematics, Southeast University,
 Nanjing 210096, China.}
\thanks{G. Shi is with the School of Electronics and Information, Northwestern Polytechnical
University, Xi'an 710129, China.}
\thanks{X. Yu is with the School of Engineering, RMIT University, Melbourne, VIC 3001, Australia.}

%

%
%
}

%
%

\markboth{
	IEEE TRANSACTIONS ON}%
{Shell \MakeLowercase{\textit{et al.}}:
	Fixed-Time Cooperative Tracking Control for Double-Integrator Multi-Agent Systems: A Time-Based Generator Approach}
%



\maketitle

\begin{abstract}
In this paper, both the fixed-time distributed consensus tracking and the fixed-time distributed average tracking problems for double-integrator-type multi-agent systems with bounded input disturbances are studied, respectively. Firstly, a new practical robust fixed-time sliding mode control method based on the time-based generator is proposed. Secondly, a fixed-time distributed consensus tracking observer for double-integrator-type multi-agent systems is designed to estimate the state disagreements between the leader and the followers under undirected and directed communication, respectively. Thirdly, a fixed-time distributed average tracking observer for double-integrator-type multi-agent systems is designed to measure the average value of reference signals under undirected communication. Note that both the observers for the distributed consensus tracking and the distributed average tracking are devised based on time-based generators and can be extended to that of high-order multi-agent systems trivially. Furthermore, by combing the fixed-time sliding mode control with the fixed-time observers, the fixed-time controllers are designed to solve the distributed consensus tracking and the distributed average tracking problems. Finally, a few numerical simulations are shown to verify the results.
\end{abstract}

\begin{IEEEkeywords}
Fixed-time, sliding mode control, time-based generator, distributed observer, consensus tracking, distributed average tracking.
\end{IEEEkeywords}

%
\IEEEpeerreviewmaketitle

\section{Introduction}
Distributed cooperation control has been a popular scientific research issue over the past decades owing to its significant value in reality such as distributed optimization \cite{Li2019}, \cite{Zhao2017}, tracking control \cite{Li2013a, Wen2018, Ren2019}, flocking and containment control \cite{Su2009Flocking, Wen1784, Wen2016}.

	In the distributed cooperation control of a flock of agents with local interactions, a premier task is to design an algorithm which makes each agent achieve consensus in position, velocity and so on. The consensus algorithm for single-integrator multi-agent systems was first developed in \cite{Olfati-Saber2004}, and then some sufficient and necessary conditions for the consensus of double-integrator multi-agent systems were generalized in \cite{YU20101089}. Distributed tracking control can be regarded as an extension of generalized consensus control, in which the followers not only have to reach consensus, but also to follow with the specified trajectory. For example, in the distributed consensus tracking and distributed average tracking, the target trajectories are the states of the leader and the average value of multiple reference signals, respectively. However, in distributed algorithms, only a few or none of the agents can acquire the target information directly. Therefore, a frequently-used method to measure the target information is to establish a distributed observer. An observer-based algorithm for nonlinear agents was proposed in \cite{Cao2017}, \cite{Yu2017} to achieve distributed consensus tracking. In \cite{Chen2015, ZHAO2017157, ZHAO2017158}, the distributed observers were designed to measure the average value of reference signals. However, these protocols are asymptotically stable, which implies that the upper-bounded convergence time is not guaranteed. So as to estimate the precise upper-bounded convergence time, the finite-time observers relying on initial conditions were proposed in \cite{CAO2010522, Zhao2013, Zhao2019c}. Unfortunately, in some engineering practices, the initial states are not available or the convergent rate has to be faster. Therefore developing the fast converging algorithms without dependence on initial states is quite necessary. The fixed-time stability strategy was first investigated in \cite{Polyakov2012}, in which the prerequisite of initial conditions was eliminated. Some novel fixed-time algorithms for single-integrator multi-agent systems were developed in \cite{Zuo2014A}. A fixed-time observer for double-integrator-type multi-agent systems to estimate the states of the leader was designed in \cite{FU20161} under undirected communication topology. Then A fixed-time directed edition for high-order integrator-type multi-agent systems was developed in \cite{Zuo2019a}; Although, after the system states converging into the unit circle in the last step it is asymptotically stable, this method provides some inspirations for the protocol in this paper.

After estimating the task trajectory in a fixed-time, the next step is to devise the fixed-time controller for the agent to track the target trajectory. For double-integrator-type systems, sliding mode control is a type of classic nonlinear control protocol \cite{edwards1998sliding}, which has the advantages of fast response, parameter change, insensitive to disturbance and simple physical implementation. Some finite-time sliding mode control methods were proposed in \cite{WU1998281, FENG20131715}. However, the process of extending it to that of double-integrator-type systems is nontrivial due to the singular problem. An attempt to design the fixed-time sliding mode control protocol was made in \cite{ZUO2015305} by utilizing a sinusoid function to offset the singularity in the neighborhood of zero, but it leads to a little uncertainty of the convergence time. The price of reducing the uncertainty is a sharp rise of the control input.

Besides the conventional fixed-time protocol that usually use two feedback terms, another popular technique in terminal control is the time-based generator technique, which was induced in \cite{MORASSO1997411} to induce the attraction of force fields. In \cite{Parra-Vega2001}, an finite-time sliding mode surface is designed, but it can't accurately track the trajectory and has no robustness. In \cite{Becerra2018}, a predefined-time control method for single-input single-output controllable linear systems was proposed. A novel fixed-time consensus strategy for single-integrator multi-agent systems was developed in \cite{ning2019practical}. Furthermore, it proved that the fixed-time protocol based on the time-based generators had a less magnitude of control inputs.

 As for the time synchronization between different agents, the clock synchronization device has been proposed in the existing paper \cite{Bolognani2016, Carli2014} to ensure the time synchronization. Therefore, it is not repeated here.

	Motivated by the above results, by utilizing the time-based generator technique, five main contributions are made in this paper. Firstly, a new fixed-time nonsingular sliding mode control method is developed, which can precisely predesign the upper-bounded convergence time without dependence on initial states and has a less magnitude of control inputs compared with the conventional ones \cite{FU20161,ZUO2015305}. Secondly, a new fixed-time distributed observer under undirected topology is proposed to evaluate the state disagreements between the leader and the followers. Thirdly, inspired by \cite{Zuo2019a}, the observer for undirected communication systems is extended to the systems with directed communication, but what's different is that the observer in this paper is a fully fixed-time protocol with the precise upper-bounded convergence time. Note that, all the observers proposed in this paper can be extended to that of high-order multi-agent systems trivially. Moreover, by combining the sliding mode control protocol with the distributed consensus tracking observers, two fixed-time controllers are developed which successfully extended the fixed-time distributed consensus tracking algorithms based on time-based generators for single-integrator-type multi-agent systems in \cite{ning2019practical} to the double ones. More importantly, the disturbance is considered in this paper, which is of great significance in practice. Finally, a controller is given to solve the fixed-time distributed average tracking problems for double-integrator-type multi-agent systems. As far as I am concerned, there is no other fixed-time distributed average tracking algorithm for double-integrator-type multi-agent systems.

The rest of this paper is given as below. In section \ref{sec2}, some mathematical preliminaries were given. In section \ref{sec3}, the fixed-time sliding mode control protocol is investigated. Next, the observers for distributed consensus tracking under both undirected and directed graph are designed. Then the observer for distributed average tracking under undirected graph is designed. Furthermore, the distributed consensus tracking and the distributed average tracking algorithms are given. In section \ref{sec4}, several simulations are given. In section \ref{sec5}, a few conclusions are made.

\section{Mathematical preliminaries} \label{sec2}

\subsection{Notations}
The real number set and the N-dimensional real vector space are denoted by $\mathbb{R}$ and $\mathbb{R}^n$, respectively. The signum function is represented by ${\rm sgn}(\cdot)$ and its vector form can be written as ${\rm sgn}(z)=[{\rm sgn}(z_1), {\rm sgn}(z_2), ..., {\rm sgn}(z_n)]^T$, where $z=[z_1,z_2,...,z_n]^T$. Let $|\cdot|$ stand for the absolute value of a scalar. The vector q-norm can be written as $\|z\|_q=(|z_1|^q+|z_2|^q+...+|z_n|^q)^{\frac{1}{q}}$. Let $\lambda_1(Q)$ and $\lambda_2(Q)$ represent the smallest and the second smallest eigenvalues of the matrix $Q$, respectively.

\subsection{Graph Theory}
The communication topology of a group of $n+1$ agents can be represented by a graph $\mathcal{G}$. If there is a leader in them, the other $n$ agents can be expressed as a subgraph $\mathcal{G}_s$. The weighted graph $\mathcal{G}=(\mathcal{\mathcal{V}},\mathcal{E})$ is constructed with a set of nodes $\mathcal{V}=\left\lbrace v_1,v_2,...,v_{n+1}\right\rbrace$ and a set of edges $\mathcal{E}=\left\lbrace e_1,e_2,...,e_m\right\rbrace$. A directed edge from $v_j$ to $v_i$ can be denoted as $(v_i,v_j)$, which means $v_i$ can receive information from $v_j$. An directed path from $v_j$ to $v_i$ consists of a sequence of edges in the form of $\mathcal{E}_{ij}=\left\lbrace (v_i,v_{i+1}),...,(v_{j-1},v_j)\right\rbrace$, which means the information can flow from $v_j$ to $v_i$. When replacing the directed edges by the undirected, it becomes undirected path and the information flow is bidirectional. It is said to contain a spanning tree if at leat there exists a node which has directed paths to all other nodes. The undirected graph is connected if and only if there at least exists an undirected path between any two notes. Let $A=[a_{ij}]\in\mathbb{R}^{n \times n}$ and $D\in\mathbb{R}^{n \times m}$ denote the adjacency matrix and the incidence matrix of the graph respectively, and $a_{ij}=1$ if there exists a directed edge from $v_j$ to $v_i$, else $a_{ij}=0$. With regard to undirected graphs, $a_{ij}=a_{ji}$. let $O=[o_{ij}]\in \mathbb{R}^{n \times n}$ denote the degree matrix and $o_{ii}=\sum_{j=1}^{n}a_{ij}$, else $p_{ij}=0$. Then the Laplacian matrix is written as $L\in\mathbb{R}^{n \times n}=O-A$. Set $a_{0i}=0$ and $a_{i0}=1$ if the agent $i$ can acquire information from the leader, else $a_{i0}=0$ and then set $B={\rm diag}\{a_{i0},a_{i1},...,a_{in}\}$.

\subsection{Time-Based Generator}
The time-based generator $\xi(t)$ is a kind of time dependent function that can be seen as a termination function. Its general properties can be generalized as follows.
\begin{enumerate}
\item $\xi(t)$ is a non-decreasing and continuous function.
\item With time going by, $\xi(t)$ increases from the initial state $\xi(0)=0$ to $\xi(t_s)=1$, and when $t>t_{s}$, $\xi(t)\equiv 1$, where $t_s$ can be predesigned arbitrarily.
\item ${\dot\xi}(0)=0$ and when $t\geq t_{s}$, ${\dot\xi}(t)\equiv 0$.
\end{enumerate}
\begin{remark}
A typical time-based generator function $\xi(t)$ is presented as follows \cite{ning2019practical}.
\begin{eqnarray}
\begin{cases}{\xi(t)=}
\frac{10}{t_s^6}t^6-\frac{24}{t_s^5}t^5+\frac{15}{t_s^4}t^4,\;0\leq t\leq t_s, \nonumber\\
1, \qquad\qquad\qquad\qquad\qquad\quad\;\; t>t_s,
\end{cases}
\end{eqnarray}

\end{remark}
Think about the differential equation as below. 
\begin{align}\label{b1}
\dot{z}=-h(t)z, z(0)=z_{0},
\end{align}
where $h(t)$ is constructed as
\begin{align}\label{b2}
h(t)=k\frac{\dot{\xi}}{1-\xi+\delta},
\end{align}
where $k\in\mathbb{R}$ and $\delta\in\mathbb{R}$ are two positive constants which satisfy $k>1$ and $0<\delta<<1$.

Solving the differential equation (\ref{b1}) one has
\begin{align}\label{b3}
  z=(\frac{1-\xi+\delta}{1+\delta})^k z_0.
\end{align}
With $t$ growing from $0$ to $t_s$, $\xi$ grows from $0$ to $1$ smoothly. Therefore, when $t\in[0,t_s)$, $z$ gradually approaches $z_0(\frac{\delta}{1+\delta})^k$. When $t\geq t_s$, the result will remain the same. If let $\delta=0.001$ and $k=3$, at the terminal moment $t_s$, the solution of (\ref{b1}) will be $z=10^{-9}z_0$. Thus we can nearly think that $z$ reaches zero at $t_s$ and the initial state $z_0$ has no effect on the convergence time.



%

\subsection{Problem Description}
\subsubsection{Fixed-Time Distributed Consensus Tracking}
Suppose that there is a double-integrator-type multi-agent system with a leader and n agents. The leader can be represented by
\begin{align}\label{b7}                   
\begin{cases}
  \dot{x}_0=v_0,\\
  \dot{v}_0=u_0,
\end{cases}
\end{align}
where $x_0\in\mathbb{R}$ and $v_0\in\mathbb{R}$ represent the position and velocity of the leader, respectively. $u_0\in\mathbb{R}$ represents the control input bounded by a positive constant $u_{max}$.

Then the followers can be modeled by           
\begin{align}\label{b8}
\begin{cases}
\dot{x}_i=v_i,\\
\dot{v}_i=u_i+d_i,\; i=1,2,...,n,
\end{cases}
\end{align}
where $x_i\in\mathbb{R}$ and $v_i\in\mathbb{R}$ denote the position and the velocity of the agent $i$, respectively. $u_i\in\mathbb{R}$ and $d_i\in\mathbb{R}$ denote the control input and the uncertainty, where $d_i$ takes the positive constant $d_{max}$ as the boundary.

The objective of fixed-time distributed consensus tracking is to devise the control input only using local information for each follower, which enable the followers to achieve consensus with the leader in a fixed time independent of initial states.
{\begin{definition}\label{d1}(Fixed-time distributed consensus tracking)
For the system described by $(\ref{b7})$ and $(\ref{b8})$, with the given observer and control input $u_i$, it is said to achieve fixed-time distributed consensus tracking if all the followers can achieve consensus with the leader in a fixed-time $T_{max}$ independent of initial conditions, i.e.,
\begin{align}\label{b9}
\begin{cases}
   \lim_{t\rightarrow T_{max}}|x_i-x_0|+|v_i-v_0|\leq c \\
   \lim_{t\rightarrow \infty}|x_i-x_0|+|v_i-v_0|=0,
\end{cases}
\end{align}
where $T_{max}$ can be predesignated arbitrarily independent of initial conditions and $c$ can be limited to the desired level.
\end{definition}}
\subsubsection{Fixed-Time Distributed Average Tracking}
Consider a double-integrator-type multi-agent system with n agents represented by $(\ref{b8})$, and each agent $i$ has a reference signal $r_i\in\mathbb{R}$ described as follows.
\begin{align}\label{bb9}
\begin{cases}
\dot{r}_i=f_i,\\
\dot{f}_i=a^r_i,\; i=1,2,...,n,
\end{cases}
\end{align}
where $f_i$ and $a^r_i$ are the velocity and acceleration of reference signal $r_i$, respectively. Note that $a^r_i$ is bounded by a positive constant $a_{max}$. Let $\overline{r}=\frac{1}{n}\sum_{i=1}^{n}r_i$, $\overline{f}=\frac{1}{n}\sum_{i=1}^{n}f_i$ and $\overline{a}=\frac{1}{n}\sum_{i=1}^{n}f_i$ be the average value of the reference signals.

The objective of fixed-time distributed average tracking is to devise control inputs only using local information for the agents, which enable them to achieve consensus with the average value of multiple reference signals in a fixed time without dependence on initial states.
\begin{definition}\label{dd2}(Fixed-time distributed average tracking)
For the system described by $(\ref{b8})$ and $(\ref{bb9})$, with the given observer and control input $u_i$, it is said to achieve fixed-time distributed average tracking if all the agents can achieve consensus with the average value of the multiple reference signals in a fixed-time $T_{max}$ which can be predesigned arbitrarily and independent of initial states, i.e.,
\begin{align}\label{bb10}
\begin{cases}
   \lim_{t\rightarrow T_{max}}|x_i-\overline{r}|+|v_i-\overline{f}|\leq c \\
   \lim_{t\rightarrow \infty}|x_i-\overline{r}|+|v_i-\overline{r}|=0,
\end{cases}
\end{align}
\end{definition}
\section{Main Results}\label{sec3}
\subsection{Fixed-Time Sliding Mode Control}
\begin{lemma}\label{lemma2}\cite{Zuo2019a}
Suppose that $z(0)=z_0$ and $V(z)$ is a positive definite Lyapunov candidate which satisfies the inequality as below.
\begin{align}\label{b5}
  \dot{V}(z)+\mu V^\nu(z)\leq0,
\end{align}
where $\mu\geq 0$ and $\nu\in(0,1)$. Then $z$ will converge to zero in a finite time $T(z_0)$ such that
\begin{align}\label{b6}
  T(z_0)\leq\frac{1}{\mu(1-\nu)}V^{1-\nu}(z_0).
\end{align}
\end{lemma}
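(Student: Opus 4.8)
The plan is to reduce the differential inequality \dref{b5} to a scalar comparison problem and integrate. First I would observe that along any trajectory $z(t)$ the scalar function $W(t) := V(z(t))$ satisfies $\dot W(t) \le -\mu W^\nu(t)$ wherever $W(t) > 0$. Since $V$ is positive definite, $W \ge 0$, and the claim is that $W$ reaches $0$ in finite time. If $z_0 = 0$ the bound is trivial, so assume $W(0) = V(z_0) > 0$; by continuity $W(t) > 0$ on some maximal interval $[0, T^*)$, and on that interval I may divide by $W^\nu$ to get $W^{-\nu}(t)\,\dot W(t) \le -\mu$.

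Next I would integrate this from $0$ to any $t < T^*$. Because $\nu \in (0,1)$, we have $1-\nu \in (0,1)$ and the antiderivative of $W^{-\nu}\dot W$ is $\frac{1}{1-\nu} W^{1-\nu}$, which is continuous up to $W = 0$. Integrating yields
\begin{align}\label{b-aux1}
\frac{1}{1-\nu}\bigl(W^{1-\nu}(t) - W^{1-\nu}(0)\bigr) \le -\mu t,
\end{align}
hence $W^{1-\nu}(t) \le W^{1-\nu}(0) - \mu(1-\nu)t$. Since the left side is nonnegative, this forces $t \le \frac{1}{\mu(1-\nu)} W^{1-\nu}(0) = \frac{1}{\mu(1-\nu)} V^{1-\nu}(z_0)$ for all $t < T^*$ at which $W>0$, so $T^* \le \frac{1}{\mu(1-\nu)} V^{1-\nu}(z_0)$ and $W(T^*) = 0$. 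One then argues that $W$ stays at $0$ afterwards (the inequality $\dot W \le 0$ together with $W \ge 0$ precludes escape), which gives convergence of $z$ to zero by positive definiteness of $V$, and the settling time bound $T(z_0) \le \frac{1}{\mu(1-\nu)} V^{1-\nu}(z_0)$ of \dref{b6}.

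The main technical point to handle carefully is the differentiability/absolute-continuity of $W(t) = V(z(t))$: the right-hand side of the system defining $z$ is discontinuous (sliding-mode context), so $W$ need not be classically differentiable everywhere. I would invoke the comparison lemma for differential inequalities in the sense of Dini derivatives — using the upper-right Dini derivative $D^+W$ in place of $\dot W$ — and note that the integration argument above goes through verbatim since $W$ is absolutely continuous along Filippov solutions. The case $\mu = 0$ is degenerate (the bound is vacuous, giving $T \le \infty$), so the substantive content is $\mu > 0$; I would also remark that the strict positivity $W(0)>0$ is what legitimizes the division by $W^\nu$, while the edge behavior as $W \to 0^+$ is harmless precisely because $1-\nu > 0$ makes $W^{1-\nu}$ continuous there. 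This lemma is in fact a standard restatement of Polyakov-type finite-time stability estimates, so beyond the Dini-derivative bookkeeping no real obstacle is expected.
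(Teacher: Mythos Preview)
Your argument is correct and is in fact the standard proof of this finite-time convergence estimate. However, the paper does not supply its own proof of this lemma: it is stated with a citation to \cite{Zuo2019a} and used as a known result, so there is no in-paper proof to compare against. Your integration-of-$W^{-\nu}\dot W$ approach is exactly the classical derivation one finds in the cited literature, and the remarks you add about Dini derivatives and the degenerate case $\mu=0$ are appropriate caveats that go slightly beyond what the paper requires.
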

A typical double-integrator-type control system is given as follows.
\begin{align}\label{b10}
\begin{cases}
  \dot{z_1}=z_2,\\
  \dot{z_2}=u+\varrho,
\end{cases}
\end{align}
where $z_1\in\mathbb{R}$ and $z_2\in\mathbb{R}$ are the system states. $\delta\in\mathbb{R}$ is a disturbance bounded by a positive constant $\varrho_{max}$.

 The objective of fixed-time sliding mode control is to devise a control input $u$ which drives the system (\ref{b10}) to the equilibrium point in a fixed time, i.e., $[z_1,z_2]=[0,0]$. The process of fixed-time double-integrator sliding mode control is generally divided into two sections. In the first section, the control input forces the system to arrive at the prescribed surface in a fixed time $t_{a1}$; In the second section, the system will slide along the surface to the equilibrium point in a fixed time $t_{a2}$. Therefore, The whole convergence time is bounded by $T_a=t_{a1}+t_{a2}$. In order to converge in the fixed time in each stage, two time-based generators $\xi_{a1}$ and $\xi_{a2}$ are used sequentially. $\xi_{a1}$ ensures the system to arrive at the prescribed surface in $t_{a1}$ and then invalid. $\xi_{a2}$ guarantees the fixed convergence time $t_{a2}$. Let $h_{a1}(t)=k\frac{\dot{\xi}_{a1}}{1-\xi_{a1}+\delta}$ and $h_{a2}(t)=k\frac{\dot{\xi}_{a2}}{1-\xi_{a2}+\delta}$. Then we have
\begin{align}\label{b11}
h_1(t)=
\begin{cases}
  h_{a1}(t),t\in[0,t_{a1}), \\
  h_{a2}(t),t\in[t_{a1},t_{a1}+t_{a2}),\\
  0,t\in[t_{a1}+t_{a2},+\infty).
\end{cases}
\end{align}
\begin{remark}
  Since $\dot{\xi}_{a1}(0)=\dot{\xi}_{a1}(t_{a1})=\dot{\xi}_{a2}(t_{a1})=\dot{\xi}_{a2}(t_{a1}+t_{a2})=0$, one obtains $h_{a1}(0)=h_{a1}(t_{a1})=h_{a2}(t_{a1})=h_{a2}(t_{a1}+t_{a2})=0$, which shows the connectivity of $h_1(t)$. Furthermore, owing to the nonnegativity of $\dot{\xi}_{a1}$ and $\dot{\xi}_{a2}$, $h_1(t)$ is also nonnegative.
\end{remark}
\begin{remark}
  In order to clarify the idea, let $\xi_{a1}$, $h_{a1}(t)$ and $\xi_{a2}$, $h_{a2}(t)$ have the same structure. But, in simulation, owing to $\xi_{a1}(t_{a1})=1$ and $\xi_{a2}(t_{a1})=0$, which leads to a sharp decrease of the derivative and causes problems. By resetting $\hat{\xi}_{a2}(t)=\xi_{a2}+1$ and $\hat{h}_{a2}(t)=k\frac{\dot{\xi}_{a2}}{2-\xi_{a2}+\delta}$, the problem caused by discontinuity is solved. Moreover, in the different steps, $k$ and $\delta$ can be selected as different constants respectively.
\end{remark}

In this paper, the fixed-time sliding mode surface is selected as
\begin{align}\label{b12}
 s=(\frac{1}{2} h_1(t)+1)z_1+z_2.
\end{align}
If $s=0$, the system arrives at the sliding mode surface and has the form as below.
\begin{align}\label{b13}
  z_2=\dot{z}_1=-(\frac{1}{2}h_1(t)+1)z_1.
\end{align}
The control input is devised as follows.
\begin{align}\label{b14}
  u=&-\frac{1}{2}\dot{h}_1(t)z_1-(\frac{1}{2}h_1(t)+1)z_2\nonumber\\
    &-\frac{1}{2}h_1(t)s-\rho{\rm sgn}(s),
\end{align}
where $\rho$ is a positive constant satisfying $\rho \geq |\varrho_{max}|+1$.
\begin{theorem}\label{t1}
  With the given control input (\ref{b14}), the system (\ref{b10}) will arrive at the sliding mode surface ($s=0$) in a fixed-time $t_{a1}$, and then slide along the surface ($s=0$) to the equilibrium point $[z_1,z_2]=[0,0]$ in a fixed-time $t_{a2}$. Thus the final upper-bounded convergence time is $T_a=t_{a1}+t_{a2}$.
\end{theorem}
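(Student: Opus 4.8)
The plan is to split the argument into the two phases announced in the theorem and to treat each phase with a Lyapunov function built from the sliding variable $s$ and from $z_1$ respectively. For the reaching phase $t\in[0,t_{a1})$, I would take $V_s=\tfrac12 s^2$ and differentiate $s$ along the trajectories of \dref{b10}. Using $\dot s=\tfrac12\dot h_1(t)z_1+(\tfrac12 h_1(t)+1)\dot z_1+\dot z_2=\tfrac12\dot h_1(t)z_1+(\tfrac12 h_1(t)+1)z_2+u+\varrho$ and substituting the control law \dref{b14}, the first three terms of $u$ cancel the corresponding terms of $\dot s$ exactly, leaving $\dot s=-\tfrac12 h_1(t)s-\rho\,{\rm sgn}(s)+\varrho$. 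Hence $\dot V_s=-\tfrac12 h_1(t)s^2-\rho|s|+\varrho s\le -\tfrac12 h_1(t)s^2-(\rho-|\varrho_{max}|)|s|\le -(\rho-|\varrho_{max}|)\sqrt{2}\,V_s^{1/2}$ because $h_1(t)\ge0$ on this interval and $\rho\ge|\varrho_{max}|+1$. Invoking Lemma~\ref{lemma2} with $\mu=(\rho-|\varrho_{max}|)\sqrt2$ and $\nu=\tfrac12$ shows $s$ reaches zero in finite time; but one must additionally argue this time is at most $t_{a1}$. The cleanest route is the time-based-generator identity \dref{b3}: on the reaching interval the $h_{a1}$ term drives the "generator part" of $s$ to essentially zero by $t_{a1}$, while the discontinuous term $-\rho\,{\rm sgn}(s)$ only helps; so I would combine the comparison estimate with the explicit decay $(\tfrac{1-\xi_{a1}+\delta}{1+\delta})^k$ to conclude reaching by $t_{a1}$ regardless of $z_0$.

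For the sliding phase $t\ge t_{a1}$, once $s\equiv0$ the closed-loop reduced dynamics are \dref{b13}, i.e. $\dot z_1=-(\tfrac12 h_1(t)+1)z_1$. On $[t_{a1},t_{a1}+t_{a2})$ we have $h_1(t)=h_{a2}(t)=k\dot\xi_{a2}/(1-\xi_{a2}+\delta)$, so integrating gives $z_1(t)=\big(\tfrac{1-\xi_{a2}+\delta}{1+\delta}\big)^k e^{-(t-t_{a1})}z_1(t_{a1})$, exactly the form \dref{b3} modulated by a decaying exponential. As $t\to t_{a1}+t_{a2}$, $\xi_{a2}\to1$ and the factor becomes $(\tfrac{\delta}{1+\delta})^k$, which for $k>1$, $0<\delta\ll1$ is negligible; thus $z_1$ (and therefore $z_2=-(\tfrac12 h_1(t)+1)z_1$) is driven within distance $c$ of the origin by $t_{a1}+t_{a2}=T_a$, independently of $z_1(t_{a1})$, and for $t>T_a$, with $h_1\equiv0$, the dynamics become $\dot z_1=-z_1$, forcing $z_1,z_2\to0$ asymptotically. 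Stacking the two phases yields the total bound $T_a=t_{a1}+t_{a2}$.

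The main obstacle I anticipate is making the reaching-phase bound rigorous: Lemma~\ref{lemma2} by itself only gives \emph{some} finite settling time $T(z_0)\le \tfrac{1}{\mu(1-\nu)}V_s^{1-\nu}(z_0)$, which still depends on the initial value $s(0)=(\tfrac12 h_{a1}(0)+1)z_1(0)+z_2(0)=z_1(0)+z_2(0)$ and need not be $\le t_{a1}$ for arbitrary initial data. So the time-based-generator structure must do the real work — one has to show that the generator term in $s$ contracts $s$ fast enough that, together with the sign term, $s$ hits zero before $t_{a1}$ no matter how large $s(0)$ is; this is where the choice $k>1$ and the explicit shape of $h_{a1}$ enter, and it is the step requiring the most care. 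A secondary subtlety is the well-posedness of solutions at $t_{a1}$ where $h_1$ switches between branches; Remarks~2--3 of the paper (the connectivity $h_{a1}(t_{a1})=h_{a2}(t_{a1})=0$ and the reset $\hat\xi_{a2}=\xi_{a2}+1$) are exactly what one would cite to dispose of this, and the equilibrium $[z_1,z_2]=[0,0]$ is invariant since $h_1,\dot h_1$ and ${\rm sgn}(0)$ contributions all vanish there.
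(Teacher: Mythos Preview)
Your approach is essentially the paper's: same Lyapunov candidates $V_1=\tfrac12 s^2$ and $V_2=\tfrac12 z_1^2$, same computation of $\dot s$, and the same realization that the time-based-generator term (via the comparison $\dot V_1\le -h_1(t)V_1$ and identity~\dref{b3}) is what delivers the initial-condition-independent bound while the sign term is auxiliary. Your explicit integration of the reduced dynamics on the sliding phase is a slightly cleaner variant of the paper's Lyapunov argument for $V_2$, but equivalent.

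One point where the paper goes further than your outline: the obstacle you flag is real, and the paper does \emph{not} resolve it by showing exact reaching at $t_{a1}$. Instead it accepts that $V_1(t_{a1})\le(\tfrac{\delta}{1+\delta})^kV_1(0)$ is only \emph{practically} zero, then (i) uses the sign term on $[t_{a1},\infty)$ to get $\dot V_1\le-\sqrt{2V_1}$ and hence exact reaching after a further small time $\hat t_{a1}\le\sqrt{2(\tfrac{\delta}{1+\delta})^kV_1(0)}$, and (ii) runs a separate error analysis of the sliding phase with the residual $s=e$, $|e|\le\sqrt{2(\tfrac{\delta}{1+\delta})^kV_1(0)}$, showing $\dot V_2\le-h_1(t)V_2$ still holds whenever $|z_1|\ge|e|$. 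Your sliding-phase computation assumes $s\equiv0$ from $t_{a1}$ on, so to match the paper's level of rigor you would need to add this perturbation step; it is straightforward once you see that $|e|$ is non-increasing and vanishes after $\hat t_{a1}$.
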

\begin{proof}
The Lyapunov candidate is constructed as $V_1=\frac{1}{2}s^2$. Differentiating (\ref{b12}) against time one has
\begin{align}\label{b15}
  \dot{s}=\frac{1}{2}\dot{h}_1(t)z_1+(\frac{1}{2}h_1(t)+1)z_2+u+\varrho.
\end{align}
Substituting the control input (\ref{b14}) in to (\ref{b15}) one has
\begin{align}\label{b16}
  \dot{s}=-\frac{1}{2}h_1(t)s-\rho{\rm sgn}(s)+\varrho.
\end{align}
Differentiating the Lyapunov candidate $V_1$ against time and then substituting (\ref{b16}) into it one has
\begin{align}\label{b17}
  \dot{V_1}&=s\dot{s}\nonumber\\
  &=-\frac{1}{2}h_1(t)s^2-\rho|s|+\varrho s \nonumber\\
  &\leq -\frac{1}{2}h_1(t)s^2-(\rho-\varrho_{max})|s| \nonumber\\
  &\leq -\frac{1}{2}h_1(t)s^2 \nonumber\\
  &=-h_1(t)V_1.
\end{align}

When $t\in[0,t_{a1}), h_1(t)=h_{a1}(t)$. According to (\ref{b1}) one obtains
\begin{align}\label{b19}
  \lim_{t\rightarrow t_{a1}}V_1\leq(\frac{1-\xi_{a1}+\delta}{1+\delta})^k V_1(0)=(\frac{\delta}{1+\delta})^k V_1(0),
\end{align}
where, according to (\ref{b3}), $(\frac{\delta}{1+\delta})^kV_1(0)$ is in the near region of zero.

When $t\geq t_{a1}, h_1(t)=h_{a2}(t)$ and one obtains
\begin{align}\label{b20}
   \dot{V}_1=&-\frac{1}{2}h_{a2}(t)s^2-\rho|s|+\delta s\nonumber\\
   \leq&-(\rho-\delta_{max})|s|\nonumber\\
   \leq&-|s|\nonumber\\
   =&-\sqrt{2V_1}
\end{align}
According to Lemma \ref{lemma2} one has that $V_1$ will converge to zero after $t_{a1}$ within a finite time $\hat{t}_{a1}$, i.e.,  $\hat{t}_{a1}\leq\sqrt{2V_1(t_{a1})}\leq\sqrt{2V_1(0)(\frac{\delta}{1+\delta})^k}$. Although $V_1$ doesn't converge to zero perfectly at $t_{a1}$, which means that the system states are in the near region of the sliding mode surface ($s=0$), the system states will still converge to zero along the sliding surface in the fixed-time. In order to clarify the idea clearly, at first, it is assumed that $V_1$ converges to zero at $t_{a1}$, which means $s=0$ as well. Then the case that there is a small error between the sliding surface and the system states at $t_{a1}$ is investigated.

 A Lyapunov candidate is constructed as $V_2=\frac{1}{2}z_1^2$.
 Differentiating $V_2$ along (\ref{b13}) one has
 \begin{align}\label{b22}
   \dot{V}_2&=z_1\dot{z_1} \nonumber\\
   &=-\frac{1}{2} h_1(t)z_1^2-z_1^2 \nonumber\\
   &\leq-\frac{1}{2} h_1(t)z_1^2\nonumber\\
   &=- h_1(t)V_2
 \end{align}
 From (\ref{b1}) one obtains that $\lim_{t\rightarrow {t_{a1}+t_{a2}}}V_2=(\frac{\delta}{1+\delta})^kV_2(t_{a1})$. When $t\geq t_{a1}+t_{a2}$, owing to $ h_1(t)=0$ and $\dot{V}_2=-z_1^2=-2V_2$ one concludes that $V_2$ will converge to zero exponentially. Since $z_2=-(h_1(t)+1)z_1=-z_1$, $z_2$ will converge to zero with the same rate of $z_1$ as well.

  In the following proof, the influence of $V_1(t_{a1})\neq 0$ is analysed. According to the relationship between $V_1$ and $s$, suppose that the system states converge to the adjacent region of the sliding surface at $t_{a1}$ and there exists an error $e$, i.e., $s=\{e|t\geq t_{a1}\}$. When $t\geq t_{a1}$, $\dot{V}_1\leq -|s|$, which means $V_1$ as well as $|e|$ are non-increasing functions and bounded by $\hat{V}_1=(\frac{\delta}{1+\delta})^k V_1(0)$ and $\hat{|e|}=\sqrt{2(\frac{\delta}{1+\delta})^k V_1(0)}$, respectively. Then (\ref{b12}) can be rewritten as
 \begin{align}\label{b23}
   e=(\frac{1}{2} h_1(t)+1)z_1+z_2.
 \end{align}
 The derivative of $z_1$ can be obtained as
 \begin{align}\label{b24}
   \dot{z_1}=-(\frac{1}{2} h_1(t)+1)z_1+e.
 \end{align}
 Substituting (\ref{b24}) into $\dot{V}_2$ one has
 \begin{align}\label{b25}
   \dot{V}_2&=-\frac{1}{2} h_1(t)z_1^2-z_1^2+ez_1 \nonumber\\
    &=-h_1(t)V_2-z_1^2+ez_1.
 \end{align}
 Note that $|e|$ is very small and non-increasing, and the convergence time is bounded by $\hat{t}_{a1}$. If $|z_1|< |e|$, which means that $|z_1|$ has been in the near region of zero. Meanwhile, $|z_1|$ is bounded by $\hat{|e|}$. After $t_{a1}+\hat{t}_{a1}$, due to $|e|=0$, $|z_1|$ will at least converge exponentially.
 If $|z_1|\geq |e|$, one has $\dot{V}_2\leq -h_1(t)V_2$. From (\ref{b1}), $V_2$ will nearly converge to zero within a fixed time $t_{a2}$ or converge to $|\hat{e}|$ in $\hat{t}_{a1}$, Whatever the case may be, $V_2$ will converge nearly to zero in a fixed time $T_a=t_{a1}+t_{a2}$. Then the proof has been completed.
\end{proof}
\subsection{Fixed-Time Distributed Consensus Tracking Observer Under Undirected Communication}
{\begin{assumption}\label{assumption1}
  The topology subgraph $\mathcal{G}_s$ for the followers is undirected and connected; There at least exists a follower which can acquire information from the leader.
\end{assumption}}
\begin{lemma}\label{lemma0}\cite{FU20161}
  If $L\in\mathbb{R}^{n\times n}$ is the Laplacian matrix of a undirected connected graph, and the nonnegative diagonal matrix $B={\rm diag}\{a_{10},...,a_{n0}\}$ with at least one element greater than zero, then $Q=L+B$ is a positive definite matrix.
  \end{lemma}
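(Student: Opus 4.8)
The plan is to use the symmetry of $Q$ together with the Dirichlet-energy representation of the Laplacian. Under Assumption~\ref{assumption1} the subgraph $\mathcal{G}_s$ is undirected, so $L$ is symmetric, and since $B$ is diagonal, $Q=L+B$ is symmetric as well; it therefore suffices to show that $x^{T}Qx>0$ for every nonzero $x\in\mathbb{R}^{n}$.

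First I would establish positive semidefiniteness by the standard identity $x^{T}Qx = x^{T}Lx + x^{T}Bx = \tfrac12\sum_{i=1}^{n}\sum_{j=1}^{n}a_{ij}(x_{i}-x_{j})^{2} + \sum_{i=1}^{n}a_{i0}x_{i}^{2}$, which is a sum of nonnegative terms because all $a_{ij}\geq 0$ and all $a_{i0}\geq 0$. Hence $x^{T}Qx\geq 0$, and equality can hold only if every summand vanishes, i.e. $a_{ij}(x_{i}-x_{j})^{2}=0$ for all $i,j$ and $a_{i0}x_{i}^{2}=0$ for all $i$.

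Next I would invoke connectivity to rule out a nontrivial null vector. The first family of equalities says $x_{i}=x_{j}$ whenever $v_{i}$ and $v_{j}$ are adjacent; since $\mathcal{G}_s$ is connected, any two vertices are joined by a path, and propagating the equality along such a path forces $x$ to be constant, say $x=c\mathbf{1}$ with $\mathbf{1}$ the all-ones vector. The second family of equalities, together with the hypothesis that some $a_{k0}>0$, gives $x_{k}=0$, hence $c=0$ and $x=0$. Thus $\ker Q=\{0\}$, and a symmetric positive semidefinite matrix with trivial kernel is positive definite, which proves the claim.

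I do not anticipate a genuine obstacle here; the one place where care is needed is the connectivity step, which is exactly where Assumption~\ref{assumption1} enters — without connectedness, $x^{T}Lx=0$ would only force $x$ to be constant on each connected component, and a single leader link $a_{k0}>0$ would not suffice to conclude $x=0$. For completeness one could also argue spectrally, using that for a connected graph $L$ has a simple zero eigenvalue with eigenvector $\mathbf{1}$ and algebraic connectivity $\lambda_{2}(L)>0$, and decomposing $x$ into its component along $\mathbf{1}$ and its component orthogonal to $\mathbf{1}$; but the direct energy computation above is shorter.
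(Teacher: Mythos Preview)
Your argument is correct and is the standard route to this fact: write the quadratic form as the Dirichlet energy plus the grounding terms, use connectivity to force any null vector to be constant, and then use the single positive $a_{k0}$ to kill that constant. There is no gap.

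As for comparison with the paper: the paper does not actually prove this lemma. It is stated with a citation to \cite{FU20161} and then used as a black box (e.g., in the analysis of the observer error dynamics~(\ref{b28}) to justify that $V_3=\tfrac12\tilde{\beta}^{T}Q\tilde{\beta}$ is a valid Lyapunov function and to invoke $\lambda_1(Q)>0$). So there is nothing in the paper to compare your proof against; you have simply supplied what the authors chose to import.
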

In the subsection, a fixed-time distributed observer based on time-based generators is designed for each follower to measure the relative position and velocity disagreements between the leader and itself under undirected communication. set two time-based generators as $\xi_{b1}$ and $\xi_{b2}$, and then $h_{b1}(t)=\frac{k\dot{\xi}_{b1}}{1-\xi_{b1}+\delta}$ and $h_{b2}(t)=\frac{k\dot{\xi}_{b2}}{1-\xi_{b2}+\delta}$. Let $h_2(t)$ has the same structure as $h_1(t)$. Denote the real tracking errors as $\tilde{x}_i=x_i-x_0$ and $\tilde{v_i}=v_i-v_0$. Then fixed-time distributed the observers $\alpha_i$ and $\beta_i$ of estimating $\tilde{x}_i$ and $\tilde{v}_i$ are proposed as below.
\begin{align}\label{b26}
  \dot{\alpha}_i=&\beta_i-b_1h_2(t)\bigg\{\sum_{j=0}^{n}a_{ij}[(\alpha_i-\alpha_j)-(x_i-x_j)]\bigg\} \nonumber\\
  &-b_2{\rm sgn}\bigg\{\sum_{j=0}^{n}a_{ij}[(\alpha_i-\alpha_j)-(x_i-x_j)]\bigg\},\nonumber\\
  \dot{\beta}_i=&u_i-c_1h_2(t)\bigg\{\sum_{j=0}^{n}a_{ij}[(\beta_i-\beta_j)-(v_i-v_j)]\bigg\} \nonumber\\
  &-c_2{\rm sgn}\bigg\{\sum_{j=0}^{n}a_{ij}[(\beta_i-\beta_j)-(v_i-v_j)]\bigg\},
\end{align}
where $i=1,...,n$, $\alpha_0=0$, $\beta_0=0$. $b_1$, $b_2$, $c_1$ and $c_2$ are positive constants satisfying $b_1=c_1\geq\frac{1}{2\lambda_1(Q)}$, $b_2\geq 1$ and $c_2> u_{max}+d_{max}$.

Let $\tilde{\alpha}_i=\alpha_i-\tilde{x}_i$ and $\tilde{\beta}_i=\beta_i-\tilde{v}_i$ be the errors between the observing disagreements and the real disagreements. If all the errors converge to zero in $t_{b1}+t_{b2}$, the observer is designed successfully.
\begin{theorem}
  With the given dynamics (\ref{b7}), (\ref{b8}) and observer (\ref{b26}), under Assumption \ref{assumption1}, $\alpha_i$ and $\beta_i$ converges to $\tilde{x}_i$ and $\tilde{v}_i$ within a fixed-time $T_b=t_{b1}+t_{b_2}$.
\end{theorem}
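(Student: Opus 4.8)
The plan is to reduce the observer error dynamics to the scalar template \eqref{b1}--\eqref{b3} analyzed in the Time-Based Generator subsection, and then to argue convergence stagewise with the two generators $\xi_{b1}$ and $\xi_{b2}$, exactly mirroring the proof of Theorem~\ref{t1}. First I would write the closed-loop error dynamics. Using $\dot{x}_0 = v_0$, $\dot{v}_0 = u_0$, $\dot{x}_i = v_i$, $\dot{v}_i = u_i + d_i$, and the definitions $\tilde{x}_i = x_i - x_0$, $\tilde{v}_i = v_i - v_0$, $\tilde{\alpha}_i = \alpha_i - \tilde{x}_i$, $\tilde{\beta}_i = \beta_i - \tilde{v}_i$, I would subtract to get
\begin{align}
\dot{\tilde{\alpha}}_i &= \tilde{\beta}_i - b_1 h_2(t)\Big\{\textstyle\sum_{j=0}^{n} a_{ij}(\tilde{\alpha}_i - \tilde{\alpha}_j)\Big\} - b_2\,{\rm sgn}\Big\{\textstyle\sum_{j=0}^{n} a_{ij}(\tilde{\alpha}_i - \tilde{\alpha}_j)\Big\}, \nonumber\\
\dot{\tilde{\beta}}_i &= -u_0 - d_i - c_1 h_2(t)\Big\{\textstyle\sum_{j=0}^{n} a_{ij}(\tilde{\beta}_i - \tilde{\beta}_j)\Big\} - c_2\,{\rm sgn}\Big\{\textstyle\sum_{j=0}^{n} a_{ij}(\tilde{\beta}_i - \tilde{\beta}_j)\Big\}, \nonumber
\end{align}
where I used $\sum_{j=0}^n a_{ij}[(\alpha_i-\alpha_j)-(x_i-x_j)] = \sum_{j=0}^n a_{ij}(\tilde{\alpha}_i - \tilde{\alpha}_j)$ since $x_i - x_j = \tilde{x}_i - \tilde{x}_j$ and $\tilde{\alpha}_j + \tilde{x}_j = \alpha_j$, with $\alpha_0 = \tilde{x}_0 = 0$. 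Stacking $\tilde{\alpha} = [\tilde{\alpha}_1,\dots,\tilde{\alpha}_n]^T$ and using $Q = L + B$, the aggregated vector carried inside the brackets is $Q\tilde{\alpha}$ (resp.\ $Q\tilde{\beta}$), so the dynamics become $\dot{\tilde{\alpha}} = \tilde{\beta} - b_1 h_2(t) Q\tilde{\alpha} - b_2\,{\rm sgn}(Q\tilde{\alpha})$ and $\dot{\tilde{\beta}} = -\mathbf{1}u_0 - d - c_1 h_2(t) Q\tilde{\beta} - c_2\,{\rm sgn}(Q\tilde{\beta})$.

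Next I would handle the velocity error $\tilde{\beta}$ first, since its dynamics are autonomous (they do not depend on $\tilde{\alpha}$). Take $W_\beta = \tfrac12 \tilde{\beta}^T Q \tilde{\beta}$, which is positive definite by Lemma~\ref{lemma0}. Differentiating, $\dot{W}_\beta = \tilde{\beta}^T Q \dot{\tilde{\beta}} = -c_1 h_2(t)\,\tilde{\beta}^T Q^2 \tilde{\beta} - c_2 (Q\tilde{\beta})^T {\rm sgn}(Q\tilde{\beta}) - (Q\tilde{\beta})^T(\mathbf{1}u_0 + d)$. The last two terms combine to $-c_2\|Q\tilde{\beta}\|_1 - (Q\tilde{\beta})^T(\mathbf{1}u_0+d) \le -(c_2 - u_{max} - d_{max})\|Q\tilde{\beta}\|_1 \le 0$ by the choice $c_2 > u_{max} + d_{max}$. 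For the $h_2$ term, $\tilde{\beta}^T Q^2 \tilde{\beta} \ge \lambda_1(Q)\,\tilde{\beta}^T Q\tilde{\beta} = 2\lambda_1(Q) W_\beta$, so $\dot{W}_\beta \le -2 c_1 \lambda_1(Q) h_2(t) W_\beta \le -h_2(t) W_\beta$ using $c_1 \ge \tfrac{1}{2\lambda_1(Q)}$. This is precisely inequality of the form $\dot{W}_\beta + h_2(t) W_\beta \le 0$, so during $[0,t_{b1})$ with $h_2 = h_{b1}$ the solution bound \eqref{b3} gives $W_\beta(t_{b1}) \le \big(\tfrac{\delta}{1+\delta}\big)^k W_\beta(0)$, i.e.\ $W_\beta$ is driven into a $\mathcal{O}(\delta^k)$ neighborhood of zero. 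Then on $[t_{b1}, t_{b1}+t_{b2})$, discarding the $h_2$ term and keeping the sign term, $\dot{W}_\beta \le -(c_2 - u_{max} - d_{max})\|Q\tilde{\beta}\|_1 \le -\kappa \sqrt{W_\beta}$ for a suitable $\kappa>0$ (bounding $\|Q\tilde{\beta}\|_1$ from below by a constant times $\|Q\tilde{\beta}\|_2 \ge \lambda_1(Q)\|\tilde{\beta}\|_2$ and $W_\beta \le \tfrac12 \lambda_n(Q)\|\tilde{\beta}\|_2^2$), so Lemma~\ref{lemma2} with $\nu = \tfrac12$ forces $W_\beta$ to zero in finite time bounded by $\tfrac{2}{\kappa}\sqrt{W_\beta(t_{b1})}$, which is itself $\mathcal{O}(\delta^{k/2})$ and hence within the budget $t_{b2}$. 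So $\tilde{\beta} \to 0$ (up to the arbitrarily small $\delta$-residual) by $T_b = t_{b1} + t_{b2}$.

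For the position error, I would then substitute the now-converged (or $\mathcal{O}(\delta^{k/2})$-small) $\tilde{\beta}$ into the $\tilde{\alpha}$ dynamics and repeat the same argument with $W_\alpha = \tfrac12 \tilde{\alpha}^T Q\tilde{\alpha}$: $\dot{W}_\alpha = -b_1 h_2(t)\tilde{\alpha}^T Q^2\tilde{\alpha} - b_2 \|Q\tilde{\alpha}\|_1 + (Q\tilde{\alpha})^T\tilde{\beta} \le -b_1 h_2(t)\tilde{\alpha}^T Q^2\tilde{\alpha} - (b_2 - \|\tilde{\beta}\|_\infty\cdot\text{const})\|Q\tilde{\alpha}\|_1$, which is $\le -h_2(t) W_\alpha$ whenever $\|\tilde{\beta}\|$ is small enough relative to $b_2 \ge 1$; once $\tilde{\beta}$ has essentially vanished this is clean, and the same two-stage generator bound plus Lemma~\ref{lemma2} gives $\tilde{\alpha} \to 0$ within $T_b$. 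The main obstacle — and the part I expect to require the most care — is the coupling: during the transient, $\tilde{\beta}$ is not yet zero, so the $(Q\tilde{\alpha})^T\tilde{\beta}$ term is a genuine perturbation of the $\tilde{\alpha}$ subsystem, and one must either (i) argue a cascade/timescale separation (the paper's stagewise generator structure is exactly meant to let $\tilde{\beta}$ settle first, analogous to the "small error $e$" analysis at the end of the proof of Theorem~\ref{t1}), or (ii) carry a combined Lyapunov function $W = W_\alpha + \gamma W_\beta$ and dominate the cross term with the sign-term margins $b_2 - 1 > 0$ and $c_2 - u_{max} - d_{max} > 0$. I would follow route (i), invoking the already-established finite settling of $\tilde{\beta}$ and then treating the residual exactly as the bounded, non-increasing error $e$ is treated in \eqref{b23}--\eqref{b25}, concluding that $\tilde{\alpha}$ and $\tilde{\beta}$ both reach the $\delta$-neighborhood of zero by $T_b = t_{b1} + t_{b2}$ and converge to zero asymptotically thereafter, which completes the proof.
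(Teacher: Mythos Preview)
Your proposal is correct and follows essentially the same route as the paper: derive the vector error dynamics with $Q=L+B$, use $W_\beta=\tfrac12\tilde\beta^TQ\tilde\beta$ to show $\dot W_\beta\le -h_2(t)W_\beta$ so the first generator drives $\tilde\beta$ into an $\mathcal{O}(\delta^{k})$ neighborhood by $t_{b1}$, then invoke the sign-term and Lemma~\ref{lemma2} for exact finite-time convergence, and finally repeat with $W_\alpha=\tfrac12\tilde\alpha^TQ\tilde\alpha$ on $[t_{b1},t_{b1}+t_{b2})$ treating the cross term $(Q\tilde\alpha)^T\tilde\beta$ as a small perturbation dominated by the margin $b_2\ge1$. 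One minor slip: in your $\dot{\tilde\beta}_i$ you should have $+u_0$ rather than $-u_0$ (since $\dot{\tilde v}_i=u_i+d_i-u_0$), but this is immaterial to the bound since only $|u_0|\le u_{max}$ is used.
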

\begin{proof}
Following from (\ref{b26}), $\dot{\tilde{\alpha}}_i$ and $\dot{\tilde{\beta}}_i$ can be written as
\begin{align}\label{b27}
     \dot{\tilde{\alpha}}_i=&\tilde{\beta}_i-b_1h_2(t)\sum_{j=0}^{n}a_{ij}(\tilde{\alpha}_i-\tilde{\alpha}_j)\nonumber\\
    &-b_2{\rm sgn}\bigg[\sum_{j=0}^{n}a_{ij}(\tilde{\alpha}_i-\tilde{\alpha}_j)\bigg],\nonumber\\
     \dot{\tilde{\beta}}_i=&-c_1h_2(t)\sum_{j=0}^{n}a_{ij}(\tilde{\beta}_i-\tilde{\beta}_j)\nonumber\\
    &-c_2{\rm sgn}\bigg[\sum_{j=0}^{n}a_{ij}(\tilde{\beta}_i-\tilde{\beta}_j)\bigg]-d_i+u_0.
\end{align}
Let $\tilde{\alpha}=[\tilde{\alpha}_1,...,\tilde{\alpha}_n]^T$, $\tilde{\beta}=[\tilde{\beta}_1,...,\tilde{\beta}_n]^T$, $d=[d_1,...,d_n]^T$ and $u\in\mathbb{R}^n=[u_0,...,u_0]^T$. The vector form of (\ref{b27}) can be written as
\begin{align}\label{b28}
 \dot{\tilde{\alpha}}&=\tilde{\beta}-b_1h_2(t)Q\tilde{\alpha}-b_2{\rm sgn}(Q\tilde{\alpha}),\nonumber\\
 \dot{\tilde{\beta}}&=-c_1h_2(t)Q\tilde{\beta}-c_2{\rm sgn}(Q\tilde{\beta})-d+u,
\end{align}
where $Q=L+B$ is a positive definite matrix according to Lemma \ref{lemma0}.

Construct a Lyapunov candidate as $V_3=\frac{1}{2}\tilde{\beta}^TQ\tilde{\beta}$. Because $Q$ is a positive definite matrix, $V_3$ is well defined. Differentiate $V_3$ against time such that
\begin{align}\label{b29}
  \dot{V}_3&=\tilde{\beta}^TQ[-c_1h_2(t)Q\tilde{\beta}-c_2{\rm sgn}(Q\tilde{\beta})-d+u]\nonumber\\
  &\leq-c_1h_2(t)(Q^{\frac{1}{2}}\tilde{\beta})^TQ(Q^{\frac{1}{2}}\tilde{\beta})-(c_2-u_{max}-d_{max})||Q\tilde{\beta}||_1\nonumber\\
  &\leq-c_1\lambda_1(Q)h_2(t)\tilde{\beta}^TQ\tilde{\beta}\nonumber\\
  &\leq-h_2(t)V_3.
\end{align}

When $t\in[0,t_{b1})$, $h_2(t)=h_{b1}(t)$. According to (\ref{b1}) one concludes that $\lim_{t\rightarrow t_{b1}}V_3\leq(\frac{\delta}{1+\delta})^kV_3(0)<<1$. Then due to
\begin{align}\label{b29b}
  V_3&=\frac{1}{2}\tilde{\beta}^T(L+B)\tilde{\beta}\nonumber\\
  &=\frac{1}{4}\sum_{i=1}^{n}\sum_{j=1}^{n}a_{ij}(\tilde{\beta}_i-\tilde{\beta}_j)^2+\frac{1}{2}\sum_{i=1}^{n}a_{i0}\tilde{\beta}_i^2,
\end{align}
one has that $\lim_{t\rightarrow t_{b1}}|\tilde{\beta}_i|\leq 2\sqrt{(\frac{\delta}{1+\delta})^kV_3(0)}<<1$.

 When $t\geq t_{b1}$, one has
\begin{align}\label{b30}
\dot{V}_3=-(\theta_2-u_{max}-d_{max})||Q\tilde{\beta}||_1.
\end{align}
Owing to
\begin{align}\label{b31}
||Q\tilde{\beta}||_1\geq||Q\tilde{\beta}||_2=\sqrt{(Q\tilde{\beta})^TQ\tilde{\beta}}\geq\sqrt{\lambda_1(Q)\tilde{\beta}^TQ\tilde{\beta}},
\end{align}
one obtains
\begin{align}\label{b32}
  \dot{V_3}\leq-(\theta_2-u_{max}-d_{max})\sqrt{2\lambda_1(Q)V_3}\leq 0.
\end{align}
Therefore, $V_3$ will keep decreasing and $|\tilde{\beta}_i|<<1$ is ensured. Following from Lemma \ref{lemma2}, when $t\geq t_{b1}$, $V_3$ will converge to zero within a finite time $\hat{t}_{b1}$, i.e., $\hat{t}_{b1}\leq (\theta_2-u_{max}-d_{max})\sqrt{(\frac{\delta}{1+\delta})^k \frac{2V_3(0)}{\lambda_1(Q)}}$.

Construct a Lyapunov candidate as $V_4=\frac{1}{2}\tilde{\alpha}^TQ\tilde{\alpha}$. Differentiate it against time and then one has
\begin{align}\label{b33}
  \dot{V_4}&=\tilde{\alpha}^TQ\dot{\tilde{\alpha}} \nonumber\\
  &=\tilde{\alpha}^TQ\tilde{\beta}-b_1h_2(t)\tilde{\alpha}^TQQ\tilde{\alpha}-b_2\tilde{\alpha}^TQ{\rm sgn}(Q\tilde{\alpha})\nonumber\\
  &=-b_1h_2(t)\tilde{\alpha}^TQQ\tilde{\alpha}-b_2||Q\tilde{\alpha}||_1+(Q\tilde{\alpha})^T\tilde{\beta}.
\end{align}
Since when $t\geq t_{b1}$, $|\tilde{\beta}_i|<<1$, one has
\begin{align}\label{b34}
  \dot{V_4}\leq&-b_1\lambda_1(Q)h_2(t)\tilde{\alpha}^TQ\tilde{\alpha}-(b_2-1)||Q\tilde{\alpha}||_1\nonumber\\
  \leq&-h_2(t)V_4.
\end{align}
When $t\in[t_{b1},t_{b1}+t_{b2})$, $h_2(t)=h_{b2}(t)$. According to (\ref{b1}), one concludes that $\lim_{t\rightarrow t_{b1}+t_{b2}}V_4\leq(\frac{\delta}{1+\delta})^kV_4(t_{b1})\approx 0$.
When $t\geq t_{b1}+t_{b2}$, due to $h_2(t)=0$ and $\tilde{\beta}=0$, from (\ref{b33}) one has $\dot{V}_4=-b_2||Q\tilde{\alpha}||_1$. Compared with (\ref{b32}), one concludes that $V_4$ will converge to zero within finite time after $t_{b1}+t_{b2}$. That also means that the observer successfully complete the observing task in a fixed time $T_b=t_{b1}+t_{b2}$. Thus the whole proof has been finished.
\end{proof}
\subsection{Fixed-Time Distributed Consensus Tracking Observer Under Directed Communication}
\begin{assumption}\label{tree}
There is a spanning tree in the directed graph $\mathcal{G}$, where the leader is set as the root node. Note that, the subgraph $\mathcal{G}_s$ doesn't need to be strongly connected or contain a spanning tree.
\end{assumption}
\begin{lemma}\cite{Li2015}
  Under Assumption \ref{tree}, define $H=L+B$, $p=[p_1,...,p_n]^T=H^{-T}1_n$, $P={\rm diag}\{p_i\}$, $Q=\frac{H^TP+PH}{2}$. Then we have that $P$ and $Q$ are both positive definite.
\end{lemma}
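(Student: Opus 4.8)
The plan is to exploit the M-matrix structure of $H=L+B$ that Assumption \ref{tree} guarantees, deduce positivity of $P$ from nonnegativity of $H^{-1}$, and then establish positive definiteness of $Q$ by a strict-diagonal-dominance (Gershgorin) argument.

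First I would invoke the standard fact that, because the directed graph $\mathcal{G}$ contains a spanning tree rooted at the leader, the matrix $H=L+B$ is a nonsingular M-matrix; in particular $H$ is invertible and $H^{-1}$ has only nonnegative entries. Since $H^{-1}$ is nonsingular none of its columns is identically zero, so the $i$-th entry of $p=H^{-T}1_n$, which equals the column sum $\sum_{k}(H^{-1})_{ki}$, is strictly positive. Hence every $p_i>0$ and $P={\rm diag}\{p_i\}$ is positive definite.

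Next I would observe that $Q=\frac{1}{2}(H^TP+PH)$ is symmetric by construction and has the Z-sign pattern off the diagonal: for $i\neq j$ the diagonal matrix $B$ contributes nothing, so $(PH)_{ij}=p_iH_{ij}=-p_ia_{ij}\le 0$ and $(H^TP)_{ij}=H_{ji}p_j=-a_{ji}p_j\le 0$, whence $Q_{ij}\le 0$. Then I would compute $Q1_n$. From $L1_n=0$ one gets $H1_n=B1_n=[a_{10},\dots,a_{n0}]^T\ge 0$, a vector that is nonzero because the spanning-tree condition forces at least one follower to receive information from the leader; moreover $P1_n=p$ and $H^Tp=1_n$ by the definition of $p$. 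Therefore $Q1_n=\frac{1}{2}(PB1_n+H^Tp)=\frac{1}{2}(PB1_n+1_n)>0$ entrywise. Reading the $i$-th row of this inequality gives $Q_{ii}>-\sum_{j\neq i}Q_{ij}=\sum_{j\neq i}|Q_{ij}|$, i.e. $Q$ is strictly row-diagonally dominant with positive diagonal, so by the Gershgorin disc theorem all of its (real) eigenvalues are positive and $Q\succ 0$.

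The main obstacle is the very first step: making rigorous that Assumption \ref{tree} really yields a nonsingular M-matrix $H$ with $H^{-1}\ge 0$. This is where the directed-graph hypothesis is genuinely used, and it requires classical M-matrix theory (e.g. the equivalence between the spanning-tree condition and all eigenvalues of $H$ having positive real parts, together with the characterization of nonsingular M-matrices by entrywise nonnegativity of the inverse). Once that is established, positivity of $P$ and of $Q$ follow by the elementary linear-algebra manipulations sketched above.
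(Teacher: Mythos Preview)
The paper does not supply its own proof of this lemma: it is stated with a citation to \cite{Li2015} and used as a black box. Your argument is correct and self-contained. The M-matrix step is indeed the crux, and your outline---$H$ is a Z-matrix whose eigenvalues have positive real parts under the spanning-tree assumption, hence a nonsingular M-matrix with $H^{-1}\ge 0$ entrywise---is the standard route. The positivity of $p$ then follows exactly as you say. For $Q$, your computation $Q1_n=\tfrac{1}{2}(H^Tp+PH1_n)=\tfrac{1}{2}(1_n+PB1_n)$, with $i$-th entry $\tfrac{1}{2}(1+p_ia_{i0})\ge\tfrac{1}{2}>0$, together with the observation $Q_{ij}\le 0$ for $i\neq j$, gives strict row diagonal dominance and hence $Q\succ 0$ via Gershgorin. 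One minor remark: you do not actually need to argue separately that $B1_n$ is nonzero, since the bound $(Q1_n)_i\ge\tfrac{1}{2}$ already suffices for strict dominance regardless of whether $a_{i0}=0$.
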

Let $\overline{d}=|\mathop{{\rm max}}\limits_{i} \{\sum_{j=0}^{n}a_{ij}(d_i-d_j)\}|$ and $p_{max}=\mathop{{\rm max}}\limits_{i}\{p_i\}$, where $d_0=0$. Then the observer is given as below.

\begin{align}\label{b35}
\dot{\alpha}_i=&\beta_i-2b_1[h_2(t)+2]\bigg\{\sum_{j=0}^{n}a_{ij}[(\alpha_i-\alpha_j)-(x_i-x_j)]\bigg\}\nonumber\\
&-b_2[h_2(t)+2]{\rm sgn}\bigg\{\sum_{j=0}^{n}a_{ij}[(\alpha_i-\alpha_j)-(x_i-x_j)]\bigg\},\nonumber\\
\dot{\beta}_i=&u_i-2c_1[h_2(t)+2]\bigg\{\sum_{j=0}^{n}a_{ij}[(\beta_i-\beta_j)-(v_i-v_j)]\bigg\}\nonumber\\
&-c_2[h_2(t)+2]{\rm sgn}\bigg\{\sum_{j=0}^{n}a_{ij}[(\beta_i-\beta_j)-(v_i-v_j)]\bigg\},
\end{align}
where $\alpha_0=\beta_0=0$, and $b_1$, $b_2$, $c_1$, $c_2$ are positive constants satisfying $b_1=c_1\geq\frac{p_{max}}{4\lambda_1(Q)}$, $b_2\geq\frac{p_{max}}{\lambda_1(Q)}$, $c_2\geq\frac{p_{max}(\overline{d}+u_{max})}{\lambda_1(Q)}$
\begin{theorem}\label{theorem3}
  With the given dynamics (\ref{b7}), (\ref{b8}) and observer (\ref{b35}), under Assumption \ref{tree}, $\alpha_i$ and $\beta_i$ converges to $\tilde{x}_i$ and $\tilde{v}_i$ within a fixed-time $T_b=t_{b1}+t_{b_2}$.
\end{theorem}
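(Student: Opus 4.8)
The plan is to reproduce the two–stage cascade argument used for the undirected observer theorem, but with the directed triple $H=L+B$, $P={\rm diag}\{p_i\}$, $Q=\tfrac{H^TP+PH}{2}$ of the cited lemma playing the role that $Q=L+B$ did there, and with the never–vanishing ``$+2$'' gain in \dref{b35} supplying the persistent feedback that the pure sliding term supplied before. First I would substitute $\tilde\alpha_i=\alpha_i-\tilde x_i$, $\tilde\beta_i=\beta_i-\tilde v_i$ (with $\tilde\alpha_0=\tilde\beta_0=0$) into \dref{b35}, use the identity $\sum_{j=0}^{n}a_{ij}[(\alpha_i-\alpha_j)-(x_i-x_j)]=[H\tilde\alpha]_i$ together with the open–loop relations $\dot{\tilde x}_i=\tilde v_i$, $\dot{\tilde v}_i=u_i+d_i-u_0$, to obtain the directed analogue of \dref{b28},
\begin{align*}
\dot{\tilde\alpha}&=\tilde\beta-2b_1[h_2(t)+2]H\tilde\alpha-b_2[h_2(t)+2]{\rm sgn}(H\tilde\alpha),\\
\dot{\tilde\beta}&=-2c_1[h_2(t)+2]H\tilde\beta-c_2[h_2(t)+2]{\rm sgn}(H\tilde\beta)-d+u_0\mathbf 1_n.
\end{align*}

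Next I would study the velocity error through $V_3=\tfrac12\tilde\beta^TP\tilde\beta$, which is positive definite since $P$ is. Using $\tilde\beta^TPH\tilde\beta=\tilde\beta^TQ\tilde\beta\ge\lambda_1(Q)\|\tilde\beta\|_2^2\ge\tfrac{2\lambda_1(Q)}{p_{max}}V_3$ and $b_1=c_1\ge\tfrac{p_{max}}{4\lambda_1(Q)}$, the linear feedback term alone gives $-2c_1[h_2(t)+2]\tilde\beta^TQ\tilde\beta\le-[h_2(t)+2]V_3$; I would then show that the remaining contribution $-c_2[h_2(t)+2]\tilde\beta^TP\,{\rm sgn}(H\tilde\beta)-\tilde\beta^TP(d-u_0\mathbf 1_n)$ is non-positive by means of $\overline d$ and the gain $c_2\ge\tfrac{p_{max}(\overline d+u_{max})}{\lambda_1(Q)}$, so that $\dot V_3\le-[h_2(t)+2]V_3$. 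On $[0,t_{b1})$, where $h_2(t)=h_{b1}(t)$, the solution formula \dref{b3} yields $V_3(t_{b1})\le(\tfrac{\delta}{1+\delta})^kV_3(0)$, which is negligible, and since $V_3=\tfrac12\sum_ip_i\tilde\beta_i^2$ this forces $|\tilde\beta_i|\ll1$ at $t_{b1}$; because $h_2(t)+2\ge2$ never vanishes, $\dot V_3\le-2V_3$ persists for $t\ge t_{b1}$, so $|\tilde\beta_i|\ll1$ is maintained and in fact $\tilde\beta$ decays at least exponentially to zero.

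Then I would close the loop on the position error with $V_4=\tfrac12\tilde\alpha^TP\tilde\alpha$. The only term absent for $V_3$ is the cross term $\tilde\alpha^TP\tilde\beta$, which for $t\ge t_{b1}$ is uniformly tiny because $|\tilde\beta_i|\ll1$ and is absorbed by the margin $b_2\ge\tfrac{p_{max}}{\lambda_1(Q)}$ exactly as the $b_2\ge1$ margin is used in \dref{b34}; hence $\dot V_4\le-[h_2(t)+2]V_4$ on $[t_{b1},t_{b1}+t_{b2})$, where $h_2(t)=h_{b2}(t)$. Applying \dref{b3} once more gives $V_4(t_{b1}+t_{b2})\le(\tfrac{\delta}{1+\delta})^kV_4(t_{b1})\approx0$, and for $t\ge t_{b1}+t_{b2}$ the residual gain $h_2(t)+2=2$ drives $V_4$, hence $\tilde\alpha$ and with it $\tilde\beta$, to zero at least exponentially. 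Altogether $\alpha_i\to\tilde x_i$ and $\beta_i\to\tilde v_i$, with the prescribed small neighbourhood reached by $T_b=t_{b1}+t_{b2}$ independently of the initial errors, which is the claim in the same practical, two-stage sense as Definition~\ref{d1}.

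The step I expect to be the main obstacle is the handling of the signum terms. In the undirected proof the symmetry of $Q=L+B$ made $\tilde\beta^TQ\,{\rm sgn}(Q\tilde\beta)=\|Q\tilde\beta\|_1\ge0$ automatic; here the signum acts on $H\tilde\beta$ with $H$ non-symmetric, so $\tilde\beta^TP\,{\rm sgn}(H\tilde\beta)$ is not sign-definite and cannot merely be discarded — bounding it crudely by an $O(\|\tilde\beta\|)$ term would, near the origin, dominate the useful $O(\|\tilde\beta\|^2)$ term and prevent convergence to a small ball, so the combined signum-plus-disturbance contribution has to be shown genuinely non-positive. This is exactly where $p_{max}$, $\lambda_1(Q)$ and $\overline d$ must enter the gain conditions, and where the persistent ``$+2$'' feedback is indispensable, since it keeps the stabilising quadratic term active for all $t$. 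A secondary difficulty is the cascade bookkeeping: ensuring the small but nonzero residual $\tilde\beta$ for $t\ge t_{b1}$ does not re-introduce a non-decaying perturbation into the $\tilde\alpha$-equation, and verifying that the switch of $h_2(t)$ at $t_{b1}$ respects the continuity noted after \dref{b11}.
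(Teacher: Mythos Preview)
Your error dynamics and your diagnosis of the obstacle are both correct, but the quadratic candidate $V_3=\tfrac12\tilde\beta^TP\tilde\beta$ you propose cannot close the estimate, and the paper does not use it. With that $V_3$, the signum contribution to $\dot V_3$ is $-c_2[h_2(t)+2]\sum_i p_i\tilde\beta_i\operatorname{sgn}([H\tilde\beta]_i)$; since $H$ is non-symmetric, the sign of $\tilde\beta_i$ need not match that of $[H\tilde\beta]_i$, so this sum is genuinely sign-indefinite. Moreover it carries the factor $[h_2(t)+2]$ while the disturbance term $-\tilde\beta^TP(d-u_0\mathbf 1_n)$ does not, so no choice of $c_2$ can make their combination non-positive. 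The appearance of $\overline d=\big|\max_i\sum_j a_{ij}(d_i-d_j)\big|$ in the gain condition (a bound on the components of $Hd$, not on $d$ itself) is already a hint that the disturbance is meant to enter through a differentiated variable rather than directly into $\dot{\tilde\beta}$.

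The paper's device is to change variables to $z=H\tilde\beta$, derive
\[
\dot z=-[h_2(t)+2]\,H\big[2c_1 z+c_2\operatorname{sgn}(z)\big]-Hd+Bu_0\mathbf 1_n,
\]
and take the non-smooth candidate $V_5=\sum_i p_i\big[c_1 z_i^2+c_2|z_i|\big]$. Then $\dot V_5=[2c_1 z+c_2\operatorname{sgn}(z)]^TP\dot z$, and because the feedback in $\dot z$ is exactly $-[h_2+2]H\,[2c_1 z+c_2\operatorname{sgn}(z)]$, the leading term becomes the \emph{sign-definite} quadratic form $-[h_2+2]\,y^TQy$ with $y=2c_1 z+c_2\operatorname{sgn}(z)$ and $Q=\tfrac12(H^TP+PH)$. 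Expanding $y^TQy\ge\lambda_1(Q)\sum_i(2c_1 z_i+c_2\operatorname{sgn}(z_i))^2$ and splitting off one copy of the ``$+2$'' to absorb the disturbance (this is where $c_1\ge\tfrac{p_{max}}{4\lambda_1(Q)}$ and $c_2\ge\tfrac{p_{max}(\overline d+u_{max})}{\lambda_1(Q)}$ are used) gives $\dot V_5\le-h_2(t)V_5-V_5$. Since $H$ is nonsingular, $z\to0$ implies $\tilde\beta\to0$. The position stage is then run the same way on $w=H\tilde\alpha$, with the now-small $z$ playing the role of a bounded perturbation handled by the margin $b_2\ge\tfrac{p_{max}}{\lambda_1(Q)}$. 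The missing ingredient in your plan is therefore the change of variable $z=H\tilde\beta$ together with the matched non-smooth Lyapunov function; a quadratic form in $\tilde\beta$ itself will not produce the needed sign-definite structure.
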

\begin{proof}
Let $\tilde{x}_i=x_i-x_0$, $\tilde{v}_i=v_i-v_0$ and $\tilde{\alpha}_i=\alpha_i-\tilde{x}_i$, $\tilde{\beta}_i=\beta_i-\tilde{v}_i$. Then we have
\begin{align}\label{bb35}
\dot{\tilde{\alpha}}_i=&\tilde{\beta}_i-2b_1[h_2(t)+2]{\sum_{j=0}^{n}a_{ij}(\tilde{\alpha}_i-\tilde{\alpha_j})}\nonumber\\
&-b_2[h_2(t)+2]{\rm sgn}\bigg[\sum_{j=0}^{n}a_{ij}(\tilde{\alpha}_i-\tilde{\alpha_j})\bigg],\nonumber\\
\dot{\tilde{\beta}}_i=&-2c_1[h_2(t)+2]{\sum_{j=0}^{n}a_{ij}(\tilde{\beta}_i-\tilde{\beta_j})}\nonumber\\
&-c_2[h_2(t)+2]{\rm sgn}\bigg[\sum_{j=0}^{n}a_{ij}(\tilde{\beta}_i-\tilde{\beta_j})\bigg]-d_i+u_0.
\end{align}
Let $z_i=\sum_{j=0}^{n}a_{ij}(\tilde{\beta_i}-\tilde{\beta}_j)$ and then one obtains
\begin{align}\label{b36}
\dot{\tilde{\beta}}_i=-2c_1[h_2(t)+2]z_i-c_2[h_2(t)+2]{\rm sgn}(z_i)-d_i+u_0.
\end{align}
Differentiating $z_i$ against time one has
\begin{align}\label{b37}
  \dot{z}_i=&-2c_1[h_2(t)+2]\sum_{j=0}^{n}a_{ij}(z_i-z_j)\nonumber\\
  &-c_2[h_2(t)+2]\bigg\{\sum_{j=0}^{n}a_{ij}[{\rm sgn}(z_i)-{\rm sgn}(z_j)]\bigg\}\nonumber\\
  &-\sum_{j=0}^{n}a_{ij}(d_i-d_j)+a_{i0}u_0.
\end{align}

According to $z=H\tilde{\beta}$, where $H$ is a nonsingular matrix. Thus if $z$ converges to zero, $\tilde{\beta}$ converges as well.
Construct a Lyapunov candidate as
\begin{align}\label{b38}
  V_5=\sum_{i=1}^{n}p_i[c_1z_i^2+c_2|z_i|].
\end{align}
Then, one has
\begin{align*}\label{b39}
  \dot{V}_5=&\sum_{i=1}^{n}p_i[2c_1z_i+c_2{\rm sgn}(z_i)]\times\nonumber\\
  &\bigg\{-2c_1[h_2(t)+2]\sum_{j=0}^{n}a_{ij}(z_i-z_j)\nonumber\\
  &-c_2[h_2(t)+2]\sum_{j=0}^{n}a_{ij}[{\rm sgn}(z_i)-{\rm sgn}(z_j)]\nonumber\\
  &-\sum_{j=0}^{n}a_{ij}(d_i-d_j)+a_{i0}u_0\bigg\}\nonumber\\
  =&-[h_2(t)+2]\sum_{i=1}^{n}p_i[2c_1z_i+c_2{\rm sgn}(z_i)]\times\nonumber\\
  &\bigg\{2c_1\sum_{j=0}^{n}a_{ij}(z_i-z_j)+c_2\sum_{j=0}^{n}a_{ij}[{\rm sgn}(z_i)-{\rm sgn}(z_j)]\bigg\}\nonumber\\
  &-\sum_{i=1}^{n}p_i[2c_1z_i+c_2{\rm sgn}(z_i)]\bigg[\sum_{j=0}^{n}a_{ij}(d_i-d_j)-a_{i0}u_0\bigg]\nonumber\\
  =&-[h_2(t)+2][2c_1 z+c_2{\rm sgn}(z)]^TPH[2c_1 z+c_2{\rm sgn}(z)]\nonumber
\end{align*}
\begin{align}
     &-\sum_{i=1}^{n}p_i[2c_1z_i+c_2{\rm sgn}(z_i)]\bigg[\sum_{j=0}^{n}a_{ij}(d_i-d_j)-a_{i0}u_0\bigg]\nonumber\\
  =&-[h_2(t)+2][2c_1 z+c_2{\rm sgn}(z)]^TQ[2c_1 z+c_2{\rm sgn}(z)]\nonumber\\
  &-\sum_{i=1}^{n}p_i[2c_1z_i+c_2{\rm sgn}(z_i)]\bigg[\sum_{j=0}^{n}a_{ij}(d_i-d_j)-a_{i0}u_0\bigg]\nonumber\\
  \leq&-\lambda_1(Q)[h_2(t)+2][2c_1 z+c_2{\rm sgn}(z)]^T[2c_1 z+c_2{\rm sgn}(z)]\nonumber\\
 &-\sum_{i=1}^{n}p_i[2c_1z_i+c_2{\rm sgn}(z_i)]\bigg[\sum_{j=0}^{n}a_{ij}(d_i-d_j)-a_{i0}u_0\bigg]\nonumber\\
   =&-\lambda_1(Q)[h_2(t)+1]\bigg\{\sum_{i=1}^{n}[4c_1^2z_i^2+4c_1c_2|z_i|+c_2^2]\bigg\}\nonumber\\
  &-\lambda_1(Q)\bigg\{\sum_{i=1}^{n}[4c_1^2z_i^2+4c_1c_2|z_i|+c_2^2]\bigg\}\nonumber\\
 &-\sum_{i=1}^{n}p_i[2c_1z_i+c_2{\rm sgn}(z_i)]\bigg[\sum_{j=0}^{n}a_{ij}(d_i-d_j)-a_{i0}u_0\bigg]\nonumber\\
 \leq&-4c_1\lambda_1(Q)[h_2(t)+1]\bigg\{\sum_{i=1}^{n}[c_1z_i^2+c_2|z_i|]\bigg\}\nonumber\\
 &-c_2\lambda_1(Q)\bigg\{\sum_{i=1}^{n}[4c_1|z_i|+c_2]\bigg\}\nonumber\\
  &+p_{max}(\overline{d}+u_{max})\bigg\{\sum_{i=1}^{n}[2c_1|z_i|+c_2]\bigg\}\nonumber\\
\leq&-\frac{4c_1\lambda_1(Q)}{p_{max}}[h_2(t)+1]\bigg\{\sum_{i=1}^{n}p_{max}[c_1z_i^2+c_2|z_i|]\bigg\}\qquad\quad\nonumber\\
\leq&-h_2(t)V_5-V_5\nonumber\\
\leq&-h_2(t)V_5.
\end{align}
When $t\in[0,t_{b1})$, $h_2(t)=h_{b1}(t)$. Using differential equation (\ref{b1}) one has $\lim_{t\rightarrow t_{b1}}V_5\leq(\frac{\delta}{1+\delta})^kV_5(0)\approx 0$. When $t\geq t_{b1}$, $\dot{V}_5\leq-V_5$. Then $V_5$ will converge exponentially and $|z_i|<<1$ is ensured. Therefore, the observer can successfully estimate the velocity disagreements between the leader and the followers in $t_{b1}$.

In the following proof, let $w_i=\sum_{j=0}^{n}a_{ij}(\tilde{\alpha_i}-\tilde{\alpha}_j)$. Then substituting $w_i$ into (\ref{bb35}) one obtains
\begin{align}\label{bb36}
\dot{\tilde{\alpha}}_i=\tilde{\beta_i}-2b_1[h_2(t)+2]w_i-b_2[h_2(t)+2]{\rm sgn}(w_i).
\end{align}
Differentiating $w_i$ against time one has
\begin{align}\label{bb37}
  \dot{w}_i=&z_i-2b_1[h_2(t)+2]\sum_{j=0}^{n}a_{ij}(w_i-w_j)\nonumber\\
  &-b_2[h_2(t)+2]\bigg\{\sum_{j=0}^{n}a_{ij}[{\rm sgn}(w_i)-{\rm sgn}(w_j)]\bigg\}.
\end{align}
When $t\in[t_{b1},t_{b1}+t_{b2})$, due to $|z_i|<<1$, $z_i$ in (\ref{bb37}) can be seen as a bounded disturbance in (\ref{b37}). The following proof is the same as before and not restated. Until now, the proof of Theorem \ref{theorem3} has been finished.
\end{proof}
\subsection{Fixed-Time Distributed Averaging Tracking Observer}
\begin{assumption}\label{dat}
  The topology graph for the $n$ agents is undirected and connected. Each agent can only receive the information form one reference signal.
\end{assumption}
\begin{lemma} \label{lemma1}\cite{Ghapani2019}
If $L\in\mathbb{R}^{n\times n}$ is a Laplacian matrix of a connected undirected graph and $D\in\mathbb{R}^{n\times m}$ is its relative incidence matrix. Then for any vector $z\in\mathbb{R}^n$ one has
\begin{align}\label{b4}
z^TLD{\rm sgn}(D^Tz)\geq \lambda_2(L)z^TD{\rm sgn}(D^Tz).
\end{align}
\end{lemma}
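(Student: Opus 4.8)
The plan is to reduce to the zero‑mean case, pass to edge variables via the factorization $L=DD^{T}$, and then combine the spectral identity between $DD^{T}$ and $D^{T}D$ with the monotonicity of the scalar sign function. First, both sides are unchanged when $z$ is replaced by $\hat z:=z-\tfrac{1}{n}(\mathbf{1}_n^{T}z)\mathbf{1}_n$, because $L\mathbf{1}_n=0$ and $D^{T}\mathbf{1}_n=0$ give $Lz=L\hat z$ and $D^{T}z=D^{T}\hat z$; so it suffices to take $z\perp\mathbf{1}_n$, in which case $z\in\mathrm{Range}(D)$ since $\mathcal G$ is connected. Writing $L=DD^{T}$ and $\phi:=D^{T}z$, the right‑hand side becomes $\lambda_2(L)(D^{T}z)^{T}\mathrm{sgn}(D^{T}z)=\lambda_2(L)\|\phi\|_1$ and the left‑hand side becomes $(D^{T}z)^{T}(D^{T}D)\mathrm{sgn}(D^{T}z)=\phi^{T}(D^{T}D)\mathrm{sgn}(\phi)$, with the crucial extra information that $\phi\in\mathrm{Range}(D^{T})$, i.e.\ $\phi\perp\ker D$. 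Thus the target reduces to $\phi^{T}(D^{T}D)\mathrm{sgn}(\phi)\ge\lambda_2(L)\|\phi\|_1$ for such $\phi$.

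Next I would use that $DD^{T}$ and $D^{T}D$ have the same nonzero eigenvalues, so $D^{T}D$ restricted to $\mathrm{Range}(D^{T})$ is positive definite with smallest eigenvalue $\lambda_2(L)$; equivalently $D^{T}D-\lambda_2(L)I$ is positive semidefinite on that subspace. Decomposing $\mathrm{sgn}(\phi)=s_{\parallel}+s_{\perp}$ with $s_{\parallel}\in\mathrm{Range}(D^{T})$ and $s_{\perp}\in\ker D=\ker(D^{T}D)$, the part $s_{\perp}$ drops out of both $\phi^{T}(D^{T}D)\mathrm{sgn}(\phi)$ and $\phi^{T}\mathrm{sgn}(\phi)$ because $\phi\perp s_{\perp}$, leaving the inequality $\phi^{T}\bigl(D^{T}D-\lambda_2(L)I\bigr)s_{\parallel}\ge 0$ on the cut space.

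That cross term is the main obstacle: semidefiniteness of $D^{T}D-\lambda_2(L)I$ only yields $\phi^{T}(\cdot)\phi\ge 0$, not $\phi^{T}(\cdot)s_{\parallel}\ge 0$, so the fact that $s_{\parallel}$ is the cut‑space projection of $\mathrm{sgn}(\phi)$ must be exploited. I would handle it edgewise, writing the term without projecting as $\sum_{e}\bigl[(Lz)_{h(e)}-(Lz)_{t(e)}-\lambda_2(L)(z_{h(e)}-z_{t(e)})\bigr]\mathrm{sgn}(z_{h(e)}-z_{t(e)})$ for edges $e=\{h(e),t(e)\}$, and feeding in the monotonicity $(\mathrm{sgn}(a)-\mathrm{sgn}(b))(a-b)\ge 0$. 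Concretely, via the co‑area decomposition $z=\sum_{k}c_k\,\chi^{S_k}+(\mathrm{const})\mathbf{1}_n$ over the super‑level sets $S_k$ of $z$ with $c_k\ge 0$, for which $(\chi^{S_k})^{T}D\,\mathrm{sgn}(D^{T}z)=(\chi^{S_k})^{T}L\chi^{S_k}=|\partial S_k|$, the problem reduces layer by layer to $(\chi^{S_k})^{T}L\,D\,\mathrm{sgn}(D^{T}z)\ge\lambda_2(L)\,|\partial S_k|$, which I would try to obtain from the Courant--Fischer characterization of $\lambda_2(L)$ applied to the test vector $\chi^{S_k}-\tfrac{|S_k|}{n}\mathbf{1}_n$. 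I expect the delicate points to be the careful accounting of boundary edges in that last estimate and a limiting argument to handle ties among the entries of $z$; failing a fully elementary argument, the statement is exactly the one recorded in \cite{Ghapani2019} and may be invoked directly.
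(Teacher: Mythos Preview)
The paper does not prove this lemma at all: it is stated as a cited result from \cite{Ghapani2019} and used as a black box in the proof of Theorem~\ref{theorem4}. So there is no ``paper's proof'' to compare against, and your closing fallback (invoke the reference directly) is precisely what the paper does.

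As for the independent argument you sketch, the reduction part is fine: passing to $\hat z\perp\mathbf{1}_n$, using $L=DD^{T}$, setting $\phi=D^{T}z\in\mathrm{Range}(D^{T})$, and identifying the smallest eigenvalue of $D^{T}D$ on that subspace with $\lambda_2(L)$ are all correct and standard. The genuine gap is exactly where you flag it. Positive semidefiniteness of $D^{T}D-\lambda_2(L)I$ on the cut space controls $\phi^{T}(\cdot)\phi$, not the cross term $\phi^{T}(\cdot)s_{\parallel}$, and your co--area route does not close it: applying Courant--Fischer to the test vector $\chi^{S_k}-\tfrac{|S_k|}{n}\mathbf{1}_n$ yields the Cheeger--type bound
\[
\lambda_2(L)\ \le\ \frac{(\chi^{S_k})^{T}L\chi^{S_k}}{|S_k|\bigl(1-\tfrac{|S_k|}{n}\bigr)}\ =\ \frac{|\partial S_k|}{|S_k|\bigl(1-\tfrac{|S_k|}{n}\bigr)},
\]
which is a quadratic-form inequality in $\chi^{S_k}$ alone and does not deliver the mixed estimate $(\chi^{S_k})^{T}LD\,\mathrm{sgn}(D^{T}z)\ge\lambda_2(L)\,|\partial S_k|$ that your layer-by-layer reduction needs. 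In short, the decomposition is sound but the key inequality at each layer remains unproven, so the attempt as written does not stand on its own; citing \cite{Ghapani2019}, as the paper does, is the appropriate resolution here.
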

The distributed observer is given as below.
\begin{align}\label{b40}
  \dot{\alpha}_i=&-b_1h_2(t)\sum_{j=1}^{n}a_{ij}(\alpha_i-\alpha_j)\nonumber\\
  &-b_2\sum_{j=1}^{n}a_{ij}{\rm sgn}(\alpha_i-\alpha_j)+\beta_i,\nonumber\\
  \dot{\beta}_i=&-c_1h_2(t)\sum_{j=1}^{n}a_{ij}(\beta_i-\beta_j)\nonumber\\
  &-c_2\sum_{j=1}^{n}a_{ij}{\rm sgn}(\beta_i-\beta_j)+a_i^r,
\end{align}
where $b_1$, $b_2$, $c_1$, $c_2$ are positive constants satisfying $b_1=c_1\geq\frac{1}{2\lambda_2(L)}$, $b_2\geq 1$, $c_2>2a_{max}$; Moreover the initial states satisfy $\sum_{i=1}^{n}\alpha_i(0)=\sum_{i=1}^{n}x_i(0)$ and $\sum_{i=1}^{n}\beta_i(0)=\sum_{i=1}^{n}v_i(0)$.

Note that $\sum_{i=1}^{n}\dot{\alpha}_i=\sum_{i=1}^{n}\beta_i$ and $\sum_{i=1}^{n}\dot{\beta}_i=\sum_{i=1}^{n}a_i^r$. Therefore under the given initial states, we have $\sum_{i=1}^{n}\beta_i=\sum_{i=1}^{n}v_i$ and $\sum_{i=1}^{n}\alpha_i=\sum_{i=1}^{n}x_i$ all the time. Following from this, if all the observers achieve consensus in a fixed time, the average value of the multiple reference signals is obtained successfully.
\begin{theorem}\label{theorem4}
With the given dynamics (\ref{b8}), (\ref{bb9}) and observer (\ref{b40}), under Assumption \ref{dat}, $\alpha_i$ and $\beta_i$ converges to $\overline{r}$ and $\overline{f}$ within a fixed-time $T_b=t_{b1}+t_{b_2}$.
\end{theorem}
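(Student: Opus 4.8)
The plan is to mimic the two-stage, time-based-generator argument already used for the undirected consensus tracking observer, exploiting that the $\beta_i$-dynamics in (\ref{b40}) are decoupled from the $\alpha_i$-dynamics, so that $\beta$ can be analysed first and then enters the $\alpha$-analysis only as a vanishing perturbation. \emph{Step 1 (reduction to consensus).} Summing each line of (\ref{b40}) over $i$ and using that the graph is undirected (so the Laplacian differences and the pairwise signum terms cancel in pairs) gives $\sum_i\dot\alpha_i=\sum_i\beta_i$ and $\sum_i\dot\beta_i=\sum_i a_i^r=\sum_i\dot f_i$; hence, with the prescribed initial conditions, $\frac1n\sum_i\beta_i\equiv\overline f$ and $\frac1n\sum_i\alpha_i\equiv\overline r$ for all $t$, so it suffices to show that the consensus errors $\hat\alpha_i:=\alpha_i-\overline r$ and $\hat\beta_i:=\beta_i-\overline f$ vanish within $T_b$. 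By construction $\hat\alpha,\hat\beta\perp 1_n$; since a common offset drops out of every neighbour difference, these errors obey (\ref{b40}) verbatim with $a_i^r$ replaced by $a_i^r-\overline a$, where $|a_i^r-\overline a|\le 2a_{\max}$. In incidence-matrix notation ($L=DD^T$ and $\sum_j a_{ij}\,{\rm sgn}(\hat\beta_i-\hat\beta_j)=(D\,{\rm sgn}(D^T\hat\beta))_i$) the $\beta$-error dynamics become $\dot{\hat\beta}=-c_1h_2(t)L\hat\beta-c_2D\,{\rm sgn}(D^T\hat\beta)+(a^r-\overline a\,1_n)$.

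\emph{Step 2 ($\beta$ settles by $t_{b1}$).} I would take $V_\beta=\tfrac12\hat\beta^TL\hat\beta$, which is positive definite on $1_n^\perp$ with $\hat\beta^TL\hat\beta\ge\lambda_2(L)\|\hat\beta\|^2$. Differentiating along the $\beta$-error dynamics, lower-bounding the signum contribution with Lemma \ref{lemma1} (so that $\hat\beta^TLD\,{\rm sgn}(D^T\hat\beta)\ge\lambda_2(L)\|D^T\hat\beta\|_1$) and bounding the perturbation by $|\hat\beta^TL(a^r-\overline a\,1_n)|=|(D^T\hat\beta)^TD^Ta^r|\le 2a_{\max}\|D^T\hat\beta\|_1$, and then using $c_1\ge\frac1{2\lambda_2(L)}$ together with $c_2$ large enough that the signum term dominates the disturbance, I obtain both $\dot V_\beta\le-h_2(t)V_\beta$ and $\dot V_\beta\le-\kappa\|D^T\hat\beta\|_1\le-\kappa\sqrt{2V_\beta}$ for a suitable $\kappa>0$. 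On $[0,t_{b1})$ one has $h_2=h_{b1}$, so the solution formula (\ref{b3}) squeezes $V_\beta$ into the $\bigl(\tfrac{\delta}{1+\delta}\bigr)^k$-neighbourhood of $0$ at $t_{b1}$; on $[t_{b1},\infty)$ the second estimate and Lemma \ref{lemma2} drive $V_\beta$, hence every $\hat\beta_i$, to $0$ in an additional finite time of order $\sqrt{(\tfrac{\delta}{1+\delta})^kV_\beta(0)}$, with $V_\beta$ non-increasing throughout. Thus $\|\hat\beta\|\ll1$ for all $t\ge t_{b1}$, and eventually $\hat\beta\equiv0$.

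\emph{Step 3 ($\alpha$ settles by $t_{b1}+t_{b2}$) and conclusion.} With $V_\alpha=\tfrac12\hat\alpha^TL\hat\alpha$ and the analogous computation, the coupling term $\hat\alpha^TL\hat\beta=(D^T\hat\alpha)^TD^T\hat\beta$ is absorbed because $\|D^T\hat\beta\|_\infty$ is already tiny for $t\ge t_{b1}$, giving $\dot V_\alpha\le-b_1\lambda_2(L)h_2(t)\hat\alpha^TL\hat\alpha-\bigl(b_2\lambda_2(L)-\|D^T\hat\beta\|_\infty\bigr)\|D^T\hat\alpha\|_1$ with positive bracket since $b_2\ge1$. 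Hence on $[t_{b1},t_{b1}+t_{b2})$, where $h_2=h_{b2}$, the choice $b_1\lambda_2(L)\ge\tfrac12$ yields $\dot V_\alpha\le-h_{b2}(t)V_\alpha$, and (\ref{b3}) squeezes $V_\alpha$ into the $\bigl(\tfrac{\delta}{1+\delta}\bigr)^k$-neighbourhood of $0$ by $t_{b1}+t_{b2}$; for $t\ge t_{b1}+t_{b2}$ one has $h_2\equiv0$ and $\hat\beta\equiv0$, so $\dot V_\alpha=-b_2\|D^T\hat\alpha\|_1\le-b_2\sqrt{2V_\alpha}$ and Lemma \ref{lemma2} completes the convergence in a further finite time. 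Undoing Step 1, $\alpha_i\to\overline r$ and $\beta_i\to\overline f$ within the fixed time $T_b=t_{b1}+t_{b2}$.

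I expect the genuine difficulty to lie in the coupling handled in Step 3: $\hat\beta$ is only \emph{approximately} zero for $t\in[t_{b1},t_{b1}+t_{b2})$ rather than exactly zero, so --- exactly as in the $V_1(t_{a1})\neq0$ part of the proof of Theorem \ref{t1} --- one must verify that this small, non-increasing, eventually vanishing residual spoils neither the estimate $\dot V_\alpha\le-h_{b2}(t)V_\alpha$ on that interval nor the finite-time tail afterwards. A secondary technical point is bookkeeping with the positive semidefiniteness of $L$: all Lyapunov estimates must be read on the subspace $1_n^\perp$, which is legitimate precisely because the conservation law of Step 1 confines the errors to that subspace for every $t$.
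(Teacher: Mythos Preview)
Your proposal is correct and follows essentially the same approach as the paper. The paper takes $V_6=\tfrac12\beta^TL\beta$ and $V_7=\tfrac12\alpha^TL\alpha$ directly --- which coincide with your $V_\beta,V_\alpha$ since $L1_n=0$ and $D^T1_n=0$ --- derives $\dot V\le-h_2(t)V$ via Lemma~\ref{lemma1} together with $c_1\lambda_2(L)\ge\tfrac12$, and handles the $\alpha$--$\beta$ coupling exactly as in your Step~3 by noting that $|\beta_i-\beta_j|\ll1$ for $t\ge t_{b1}$; your Step~1 (sum-preservation, so that consensus implies convergence to $\overline r,\overline f$) simply makes explicit the reduction the paper states in the paragraph preceding the theorem.
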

\begin{proof}
Construct a Lyapunov candidate as $V_6=\frac{1}{2}\beta^TL\beta$ and (\ref{b40}) can be written in the vector form as
\begin{align}\label{bb40}
  \dot{\alpha}=&-b_1h_2(t)L\alpha-b_2D{\rm sgn}(D^T\alpha)\nonumber\\
  \dot{\beta}=&-c_1h_2(t)L\beta-c_2D{\rm sgn}(D^T\beta)+a.
\end{align}
Then we have
\begin{align}\label{b41}
\dot{V}_6=&\beta^TL\dot{\beta}\nonumber\\
=&\beta^TL[-c_1h_2(t)L\beta-c_2D{\rm sgn}(D^T\beta)+a]\nonumber\\
\leq&-c_1h_2(t)\beta^TLL\beta-c_2\beta^TLD{\rm sgn}(D^T\beta)+\beta^TLa\nonumber\\
\leq&-c_1h_2(t)\lambda_2(L)\beta^TL\beta-c_2\beta^TD{\rm sgn}(D^T\beta)+\beta^TLa\nonumber\\
\leq&-h_2(t)V_6-c_2||D^T\beta||_1+(D^T\beta)^TD^Ta\nonumber\\
\leq&-h_2(t)V_6-(c_2-2a_{max})||D^T\beta||_1\nonumber\\
\leq&-h_2(t)V_6.
\end{align}
When $t\in[0,t_{b1})$, $h_2(t)=h_{b1}(t)$. Using differential equation (\ref{b1}) one has $\lim_{t\rightarrow t_{b1}}V_6\leq(\frac{\delta}{1+\delta})^kV_6(0)\approx 0$. When $t\geq t_{b1}$, $\dot{V}_6\leq-(c_2-2a_{max})||D^T\beta||_1$. Then compared with (\ref{b32}) one has $V_6$ will converge in finite time after $t_{b1}$ and $\beta_i<<1$ is ensured.
Construct the Lyapunov candidate as $V_6=\frac{1}{2}\alpha^TL\alpha$
\begin{align}\label{b42}
\dot{V}_7=&\alpha^T L\dot{\alpha}\nonumber\\
=&\alpha^T L[-b_1h_2(t)L\alpha-c_2D{\rm sgn}(D^T\alpha)+\beta]\nonumber\\
=&-b_1h_2(t)\alpha^TLL\alpha-c_2\alpha^T LD{\rm sgn}(D^T\alpha)+\alpha L\beta]\nonumber\\
\leq&-b_1\lambda_2(L)h_2(t)\alpha^TL\alpha-c_2||D^T\alpha||_1+(D^T\alpha)^TD^T\beta\nonumber\\
\leq&-h_2(t)V.
\end{align}
Due to $\beta_i<<1$, $|\beta_i-\beta_j|<<1$ is ensured. Then one has $c_2||D^T\alpha||_1>(D^T\alpha)^TD^T\beta$. The following is the same as before and hence omitted. Until now, the proof of Theorem \ref{theorem4} has been finished.
\end{proof}
\begin{remark}
All the observers proposed in this paper can be extended to that of high-order multi-agent systems by using more time-based
 generators and more integrators.
\end{remark}
\subsection{Distributed Consensus Tracking and Distributed Average Tracking Control}
After designing the observers, the next step is to design the controllers by using the information provided by the observers. In this subsection, the control inputs for the distributed consensus tracking and the distributed average tracking will be given.
\begin{theorem}\label{tracking}
  Under dynamics (\ref{b7}), (\ref{b8}) and Assumption \ref{assumption1} (Assumption \ref{tree}), with the observer (\ref{b26}) (observer (\ref{b35})) and the control input
  \begin{align}\label{bt1}
     u_i=
  \begin{cases}
  =0,t\in[0,T_c),\\
=-\frac{1}{2}\dot{h}_1(t)\alpha_i-(\frac{1}{2}h_1(t)+1)\beta_i\\
-\frac{1}{2}h_1(t)s_i-\rho{\rm sgn}(s_i),t\geq T_b,
  \end{cases}
  \end{align}
  where $T_c\geq T_b$, $s_i=(\frac{1}{2}h_1(t)+1)\alpha_i+\beta_i$, $\rho\geq d_{max}+u_{max}+1$, the fixed-time distributed consensus tracking for double-integrator-type multi-agent systems is solved. Further more, the upper-bounded convergence time is $T_a+T_c$.
\end{theorem}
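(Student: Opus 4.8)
The plan is to argue by certainty equivalence: during the dwell interval $[0,T_c)$ let the observer converge, and then show that, once the controller in (\ref{bt1}) is switched on at $T_c$, the closed loop governing each tracking error pair $(\tilde x_i,\tilde v_i)=(x_i-x_0,v_i-v_0)$ reduces \emph{exactly} to the scalar disturbed double integrator of Theorem~\ref{t1}, so that fixed-time convergence within $T_a=t_{a1}+t_{a2}$ follows from there. The total time is then $T_c$ (observer phase, absorbing $T_b$) plus $T_a$ (sliding-mode phase).

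First I would check that the observer is indifferent to the choice of $u_i$. Subtracting $\dot{\tilde v}_i=\dot v_i-\dot v_0=u_i+d_i-u_0$ from the $\dot\beta_i$-equation of (\ref{b26}) (resp.\ (\ref{b35})), the term $u_i$ cancels — which is exactly why the error system (\ref{b27}) (resp.\ (\ref{bb35})) carries no $u_i$. Hence setting $u_i\equiv0$ on $[0,T_c)$ does not affect the observer, and the undirected-communication observer theorem (Theorem~2, resp.\ Theorem~\ref{theorem3}) applies verbatim: $\alpha_i\to\tilde x_i$ and $\beta_i\to\tilde v_i$ within the fixed time $T_b\le T_c$. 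Because $u_i$ cancels in the error dynamics, the discontinuous switch of $u_i$ at $t=T_c$ also leaves the already-converged observer error identically zero for $t\ge T_c$. Finally, on the finite interval $[0,T_c]$ all states remain bounded — the leader and each follower are double integrators driven by the bounded signals $u_0$ and $d_i$ — so $(\tilde x_i(T_c),\tilde v_i(T_c))$ is a well-defined finite initial condition for the next phase, and fixed-time convergence is precisely what makes this sufficient.

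Next, for $t\ge T_c$ set $z_{1,i}:=\tilde x_i$, $z_{2,i}:=\tilde v_i$, so $\dot z_{1,i}=z_{2,i}$ and $\dot z_{2,i}=u_i+d_i-u_0$. Substituting $\alpha_i=z_{1,i}$, $\beta_i=z_{2,i}$ (legitimate by the previous paragraph) into $s_i$ and into (\ref{bt1}), the sliding variable $s_i=(\tfrac12 h_1(t)+1)z_{1,i}+z_{2,i}$ coincides with (\ref{b12}) and the closed loop becomes $\dot z_{2,i}=-\tfrac12\dot h_1(t)z_{1,i}-(\tfrac12 h_1(t)+1)z_{2,i}-\tfrac12 h_1(t)s_i-\rho\,{\rm sgn}(s_i)+(d_i-u_0)$, which is exactly system (\ref{b10})--(\ref{b14}) with lumped disturbance $\varrho_i:=d_i-u_0$, $|\varrho_i|\le d_{max}+u_{max}=:\varrho_{max}$. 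Since $\rho\ge d_{max}+u_{max}+1=\varrho_{max}+1$ satisfies the gain condition of Theorem~\ref{t1}, and interpreting $h_1(t)$ in (\ref{bt1}) as the controller time-based generator whose internal clock is started at $T_c$ (so $\xi_{a1}(T_c)=0$), Theorem~\ref{t1} gives that $(z_{1,i},z_{2,i})$ reaches $s_i=0$ in the fixed time $t_{a1}$ and then slides to the origin in the fixed time $t_{a2}$; hence $|x_i-x_0|+|v_i-v_0|\to0$ with terminal time $T_c+T_a$, which is (\ref{b9}).

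Finally I would remove the idealization that the observer is \emph{exactly} exact at $T_c$. As in the proof of Theorem~2, the residual errors $\tilde\alpha_i,\tilde\beta_i$ are non-increasing and of order $(\tfrac{\delta}{1+\delta})^k$, so $s_i=(\tfrac12 h_1(t)+1)(z_{1,i}+\tilde\alpha_i)+(z_{2,i}+\tilde\beta_i)$ differs from the ideal sliding variable by a small non-increasing term, and $\dot z_{2,i}$ picks up a correspondingly small bounded perturbation. Repeating the last part of the proof of Theorem~\ref{t1} (the case $V_1(t_{a1})\ne0$) shows this only degrades exact fixed-time convergence to \emph{practical} fixed-time convergence: $|x_i-x_0|+|v_i-v_0|\le c$ at $T_{max}=T_c+T_a$, with $c$ shrinkable to any desired level by taking $\delta$ small, plus asymptotic convergence to zero — exactly Definition~\ref{d1}. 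The directed case is identical with Theorem~\ref{theorem3} and observer (\ref{b35}) replacing Theorem~2 and (\ref{b26}). I expect the main obstacle to be precisely this last step — propagating the decaying observer error through the discontinuous sliding-mode controller without destroying the fixed-time bound — together with the bookkeeping of the two independent time-based-generator clocks ($h_2$ running on $[0,T_b]$ for the observer, $h_1$ running on $[T_c,T_c+T_a]$ for the controller).
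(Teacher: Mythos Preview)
Your proposal is correct and follows essentially the same approach as the paper: once the observer has converged (so $\alpha_i=\tilde x_i$, $\beta_i=\tilde v_i$), substitute into $s_i$ and (\ref{bt1}), obtain $\dot s_i=-\tfrac12 h_1(t)s_i-\rho\,{\rm sgn}(s_i)+d_i-u_0$, and invoke Theorem~\ref{t1} with lumped disturbance $\varrho_i=d_i-u_0$. Your write-up is in fact more careful than the paper's short proof --- you explicitly verify that the observer error dynamics are independent of $u_i$, check boundedness on $[0,T_c]$, and track the practical-versus-exact convergence issue and the two time-based-generator clocks --- but the core argument is identical.
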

\begin{proof}
 When $t\geq T_{b}$, one has $\alpha_i=x_i-x_0$ and $\beta_i=v_i-v_0$. Then we have
 \begin{align}\label{bt11}
   \dot{s}_i=&\frac{1}{2}\dot{h}_1(t)\alpha_i+(\frac{1}{2}h_1(t)+1)\dot{\alpha}_i+\dot{\beta}_i\nonumber\\
   =&\frac{1}{2}\dot{h}_1(t)\alpha_i+(\frac{1}{2}h_1(t)+1)\beta_i+\dot{v}_i-\dot{v}_0\nonumber\\
   =&\frac{1}{2}\dot{h}_1(t)\alpha_i+(\frac{1}{2}h_1(t)+1)\beta_i+u_i+d_i-u_0.
 \end{align}
 Then substitute (\ref{bt1}) into (\ref{bt11}), one has
 \begin{align}\label{bt12}
    \dot{s}_i=-\frac{1}{2}h_1(t)s_i-\rho{\rm sgn}(s_i)+d_i-u_0.
 \end{align}
 The other part of the proof is the same as Theorem \ref{t1} and omitted. Thus the whole proof has been finished.
\end{proof}
\begin{theorem}\label{dtracking}
  Under dynamics (\ref{b8}), (\ref{bb9}) and Assumption \ref{dat}, with the observer (\ref{b40}) and the control input
  \begin{align}\label{bt2}
       u_i=
  \begin{cases}
  =0,\in[0,T_c),\\
=-\frac{1}{2}\dot{h}_1(t)(x_i-\alpha_i)-(\frac{1}{2}h_1(t)+1)(v_i-\beta_i)\\
-\frac{1}{2}h_1(t)s_i-\rho{\rm sgn}(s_i),t\geq T_b,
  \end{cases}
  \end{align}
  where $s_i=(\frac{1}{2}h_1(t)+1)(x_i-\alpha_i)+(v_i-\beta_i)$, $\rho\geq d_{max}+a_{max}+1$, the fixed-time distributed average tracking problems for double-integrator-type multi-agent systems is solved. Further more, the upper-bounded convergence time is $T_a+T_c$.
\end{theorem}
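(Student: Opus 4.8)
The plan is to reduce the claim to a direct composition of Theorem~\ref{theorem4} and Theorem~\ref{t1}, mirroring the proof of Theorem~\ref{tracking}. Since $T_c\geq T_b$, Theorem~\ref{theorem4} guarantees that by the time $u_i$ switches on (i.e.\ for $t\geq T_c$) the averaging observer has practically converged, so $\alpha_i(t)=\overline{r}(t)$ and $\beta_i(t)=\overline{f}(t)$ up to the negligible $(\frac{\delta}{1+\delta})^k$-type residuals of that theorem. Consequently, for $t\geq T_c$ one has $\dot{\alpha}_i=\dot{\overline{r}}=\overline{f}=\beta_i$ and $\dot{\beta}_i=\dot{\overline{f}}=\overline{a}$, where $|\overline{a}|\leq a_{max}$.

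Next I would introduce the auxiliary errors $e_i^x=x_i-\alpha_i$ and $e_i^v=v_i-\beta_i$, which for $t\geq T_c$ coincide with $x_i-\overline{r}$ and $v_i-\overline{f}$. Differentiating along (\ref{b8}) and the converged observer gives
\begin{align}\label{dtr1}
\dot{e}_i^x=v_i-\dot{\alpha}_i=e_i^v,\qquad \dot{e}_i^v=\dot{v}_i-\dot{\beta}_i=u_i+d_i-\overline{a},
\end{align}
so $(e_i^x,e_i^v)$ is a double integrator of the form (\ref{b10}) with bounded disturbance $\varrho_i=d_i-\overline{a}$, $|\varrho_i|\leq d_{max}+a_{max}$. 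The variable $s_i$ in (\ref{bt2}) is exactly the sliding surface (\ref{b12}) for the pair $(e_i^x,e_i^v)$, and the control (\ref{bt2}) is exactly the control (\ref{b14}) written for that pair, with $\rho\geq d_{max}+a_{max}+1\geq|\varrho_{max}|+1$. Substituting (\ref{bt2}) into $\dot{s}_i$ cancels the feedforward terms and leaves
\begin{align}\label{dtr2}
\dot{s}_i=-\tfrac{1}{2} h_1(t)s_i-\rho\,{\rm sgn}(s_i)+d_i-\overline{a},
\end{align}
which is identical to (\ref{b16}).

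From (\ref{dtr1})--(\ref{dtr2}) the result follows from Theorem~\ref{t1} after shifting the time origin to $T_c$ (so that the generators driving $h_1$ act on $[T_c,T_c+t_{a1})$ and $[T_c+t_{a1},T_c+T_a)$): with $V_1=\frac{1}{2} s_i^2$ one obtains $\dot V_1\leq -h_1(t)V_1$ on the reaching phase, driving $s_i$ into a negligible neighbourhood of $0$ by $T_c+t_{a1}$ and then to $0$ in an extra finite time; on the surface $\dot{e}_i^x=-(\frac{1}{2} h_1(t)+1)e_i^x$, and $V_2=\frac{1}{2}(e_i^x)^2$ satisfies $\dot V_2\leq -h_1(t)V_2$, so $e_i^x$ (hence $e_i^v=-(\frac{1}{2} h_1(t)+1)e_i^x$) reaches a negligible neighbourhood of $0$ by $T_c+T_a$ and converges exponentially afterwards. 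Since $e_i^x=x_i-\overline{r}$ and $e_i^v=v_i-\overline{f}$ for $t\geq T_c$, this yields $|x_i-\overline{r}|+|v_i-\overline{f}|\leq c$ at $T_a+T_c$ and $\to 0$ as $t\to\infty$, i.e.\ Definition~\ref{dd2}, with $c$ adjustable through $\delta$ and $k$; hence the fixed-time distributed average tracking is solved with upper bound $T_a+T_c$.

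The only step that is not a verbatim transcription of Theorem~\ref{t1}, and thus the main obstacle, is accounting for the \emph{coupling} between the residual observer error and the tracking error: the observer is only practically exact, so in reality $\alpha_i=\overline{r}+\epsilon_i^x$, $\beta_i=\overline{f}+\epsilon_i^v$ with $\epsilon_i^x,\epsilon_i^v$ (and their derivatives, after $t_{b1}+t_{b2}$) small, non-increasing and bounded by the residual of Theorem~\ref{theorem4}. These quantities enter (\ref{dtr1})--(\ref{dtr2}) as an extra bounded perturbation, and one closes the argument by repeating the residual-error (``$V_1(t_{a1})\neq0$'') analysis at the end of the proof of Theorem~\ref{t1}, absorbing the size of $\epsilon$ into the residual level $c$. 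One should also invoke $T_c\geq T_b$ precisely here: the control is deliberately held at $0$ on $[0,T_c)$ so that it activates only after the observer has (practically) settled, which is why the overall bound is $T_a+T_c$ rather than $T_a+T_b$.
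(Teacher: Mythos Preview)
Your proposal is correct and follows essentially the same route as the paper: the paper's proof simply asserts that for $t\geq T_b$ one has $\alpha_i=\overline{r}$, $\beta_i=\overline{f}$, then computes $\dot s_i$ and substitutes (\ref{bt2}) to obtain exactly your equation (\ref{dtr2}), after which it defers to Theorem~\ref{t1}. Your write-up is in fact more careful than the paper's, since you explicitly introduce the error pair $(e_i^x,e_i^v)$, justify the time shift of $h_1$ to $T_c$, and discuss how the residual observer error $\epsilon_i$ is absorbed by the $V_1(t_{a1})\neq 0$ analysis of Theorem~\ref{t1}; the paper omits all of these points and simply writes ``the other part of the proof is the same as Theorem~\ref{t1} and omitted.''
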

\begin{proof}
  When $t\geq T_{b}$, one has $\alpha_i=\overline{r}$ and $\beta_i=\overline{f}$. Then we have
 \begin{align}\label{bt111}
   \dot{s}_i=&\frac{1}{2}\dot{h}_1(t)(x_i-\alpha_i)+(\frac{1}{2}h_1(t)+1)(v_i-\beta_i)\nonumber\\
  &+u_i+d_i-\dot{\beta}_i.
 \end{align}
 Then substitute (\ref{bt2}) into (\ref{bt111}), one has
 \begin{align}\label{bt122}
    \dot{s}_i=-\frac{1}{2}h_1(t)s_i-\rho{\rm sgn}(s_i)+d_i-\overline{a}.
 \end{align}
 The other part of the proof is the same as Theorem \ref{t1} and omitted. Thus the whole proof has been completed.
\end{proof}
\section{Numerical simulations}\label{sec4}

\textbf{Example 1. }\label{example1}
A simulation for Theorem \ref{t1} is given as follows. Set $k=2$, $\delta=0.01$, $\varrho={\rm sin}(t)$, $\rho=2$, $t_{a1}=t_{a2}=3$, $z_1(0)=200$ and $z_2(0)=100$. The results with upper-bounded convergence time $T_a=6s$ are shown in Fig. 1.
\begin{figure}[htbp]
	\center{
		\includegraphics[width=6.6cm]{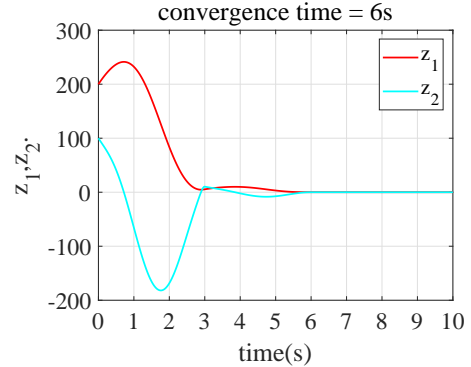}\\
		\caption{The results of the sliding mode control in Example 1.} }
\end{figure}

\textbf{Example 2. }\label{example2}
A simulation for Theorem \ref{tracking} under Assumption \ref{assumption1} is given as follows. Consider a multi-agent system described by (\ref{b7}) and (\ref{b8}) with the undirected communication topology in Fig. 2. Set $k=2$, $\delta=0.01$, $\rho=8$, $b_1=c_1=4$, $b_2=1$, $c_2=8$, $t_{a1}=t_{a2}=3$, $t_{b1}=t_{b2}=1.5$, $T_b=T_c$, $u_0=1+5{\rm sin}(t)$ and $x_i(0)$, $v_i(0)$ with random states. The results with upper-bounded convergence time $T_a+T_c=9s$ are shown in Fig. 3 and Fig. 4.
\begin{figure}[htbp]
	\center{
		\includegraphics[width=2.8cm]{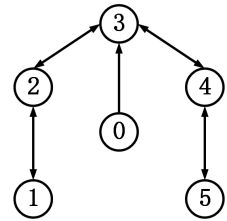}\\
		\caption{The communication topology in Example 2.} }
\end{figure}
\begin{figure}[htbp]
	\center{
		\includegraphics[width=6.6cm]{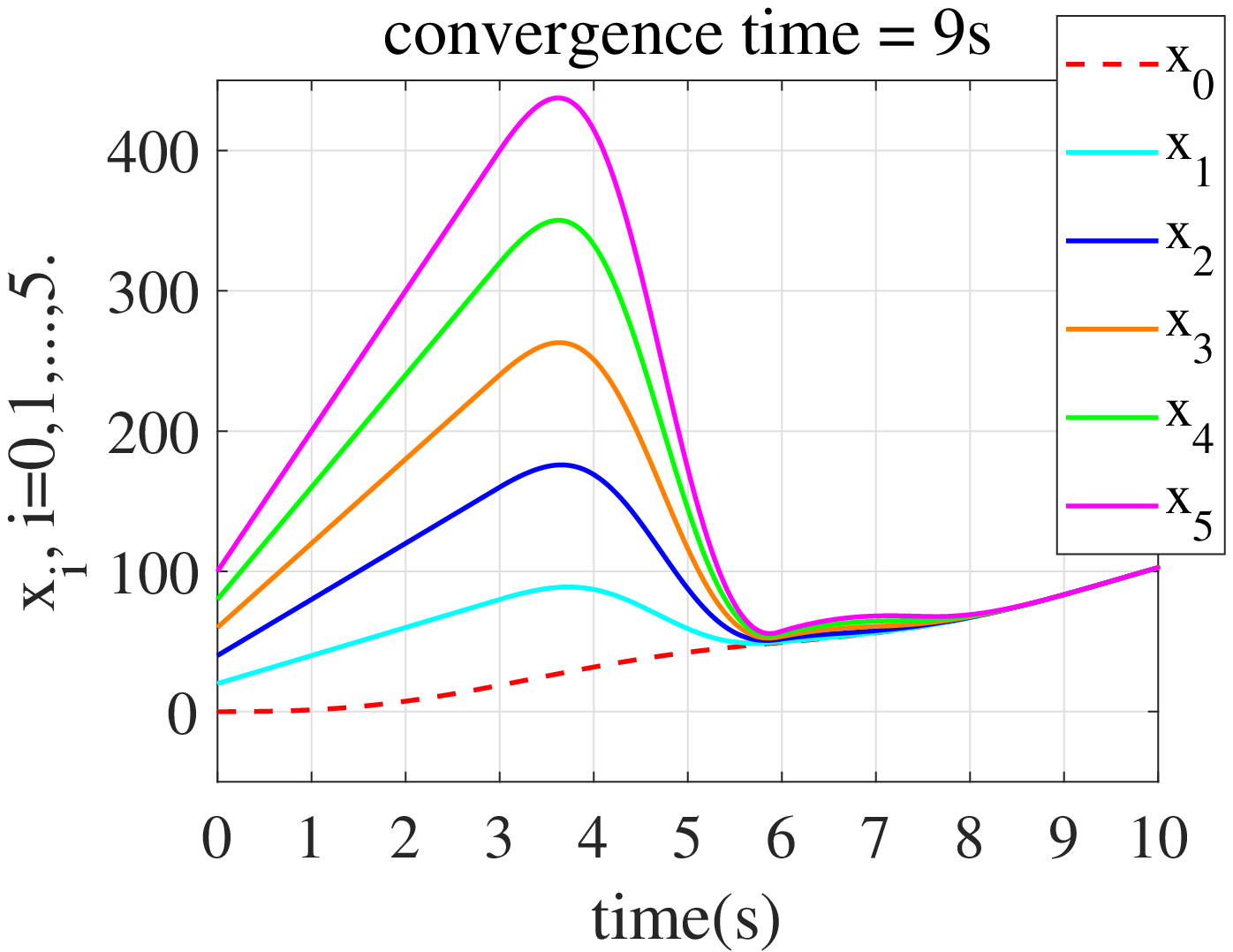}\\
		\caption{The positions of the agents in Example 2.} }
\end{figure}
\begin{figure}[htbp]
	\center{
		\includegraphics[width=6.6cm]{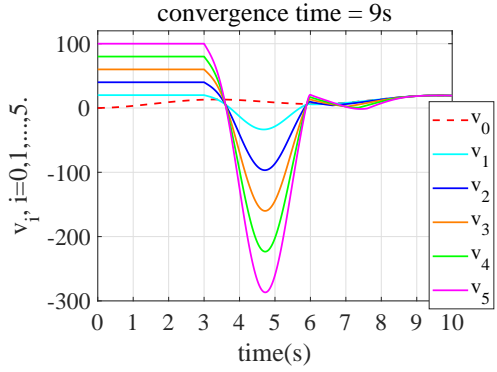}\\
		\caption{The velocities of the agents in Example 2.} }
\end{figure}

{\textbf{Example 3. }\label{example3}
A simulation for Theorem \ref{tracking} under Assumption \ref{tree} is given as follows. Consider a multi-agent system described by (\ref{b7}) and (\ref{b8}) with the directed communication topology in Fig. 5. Set $k=2$, $\delta=0.01$, $\rho=21$, $b_1=c_1=2$, $b_2=7$, $c_2=34$, $t_{a1}=t_{a2}=2$, $t_{b1}=t_{b2}=1$, $T_b=T_c$,  $u_0=2+18{\rm sin}(t)$ and $x_i(0)$, $v_i(0)$ with random states. The results with upper-bounded convergence time $T_a+T_c=6s$ are shown in Fig. 6 and Fig. 7.
\begin{figure}[htbp]
	\center{
		\includegraphics[width=3cm]{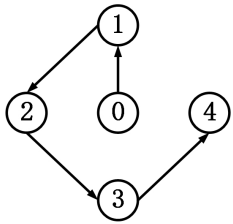}\\
		\caption{The communication topology in Example 3.} }
\end{figure}
\begin{figure}[htbp]
	\center{
		\includegraphics[width=6.6cm]{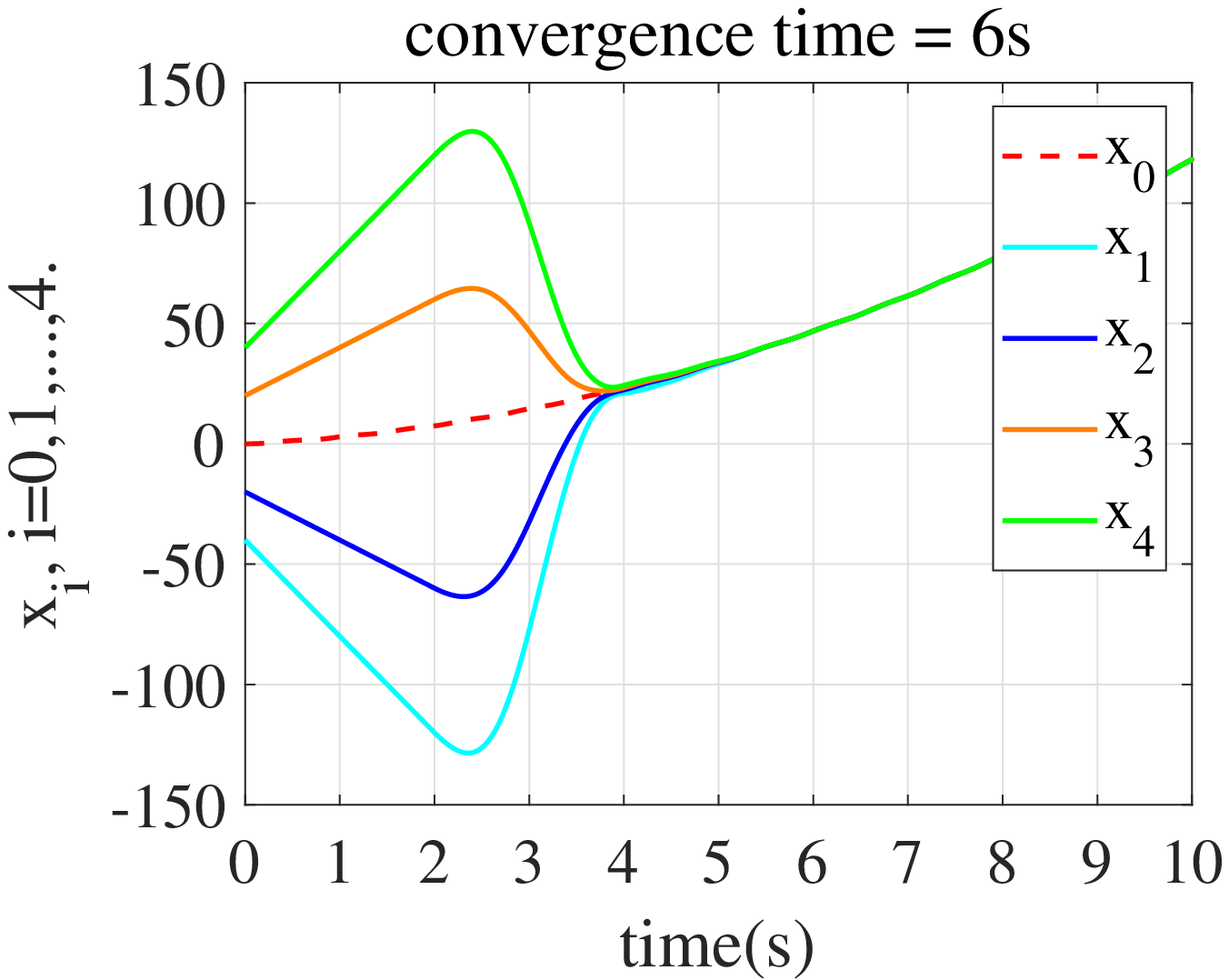}\\
		\caption{The positions of the agents in Example 3.} }
\end{figure}
\begin{figure}[htbp]
	\center{
		\includegraphics[width=6.6cm]{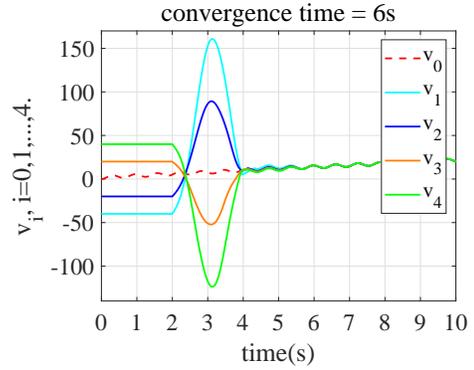}\\
		\caption{The velocities of the agents in Example 3.} }
\end{figure}

{\textbf{Example 4. }\label{example4}
A simulation for Theorem \ref{dtracking} under Assumption \ref{dat} is given as follows. Consider a multi-agent system described by (\ref{b8}) and (\ref{bb9}) with the undirected communication topology in Fig. 8. Set $k=2$, $\delta=0.01$, $\rho=63$, $b_1=c_1=0.25$, $b_2=1$, $c_2=123$, $t_{a1}=t_{a2}=4$, $t_{b1}=t_{b2}=2$, $T_b=T_c$, $a_1^r=41+20{\rm sin}(5t)$, $a_2^r=51+10{\rm sin}(5t)$, $a_3^r=30+30{\rm sin}(5t)$, $a_4^r=40+20{\rm sin}(5t)$ and $x_i(0)$, $v_i(0)$ with random states. The results with upper-bounded convergence time $T_a+T_c=12s$ are shown in Fig. 9 and Fig. 10.
\begin{figure}[htbp]
	\center{
		\includegraphics[width=3cm]{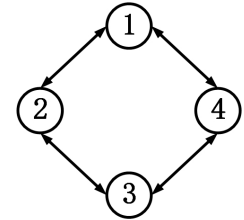}\\
		\caption{The communication topology in Example 4.} }
\end{figure}
\begin{figure}[htbp]
	\center{
		\includegraphics[width=6.6cm]{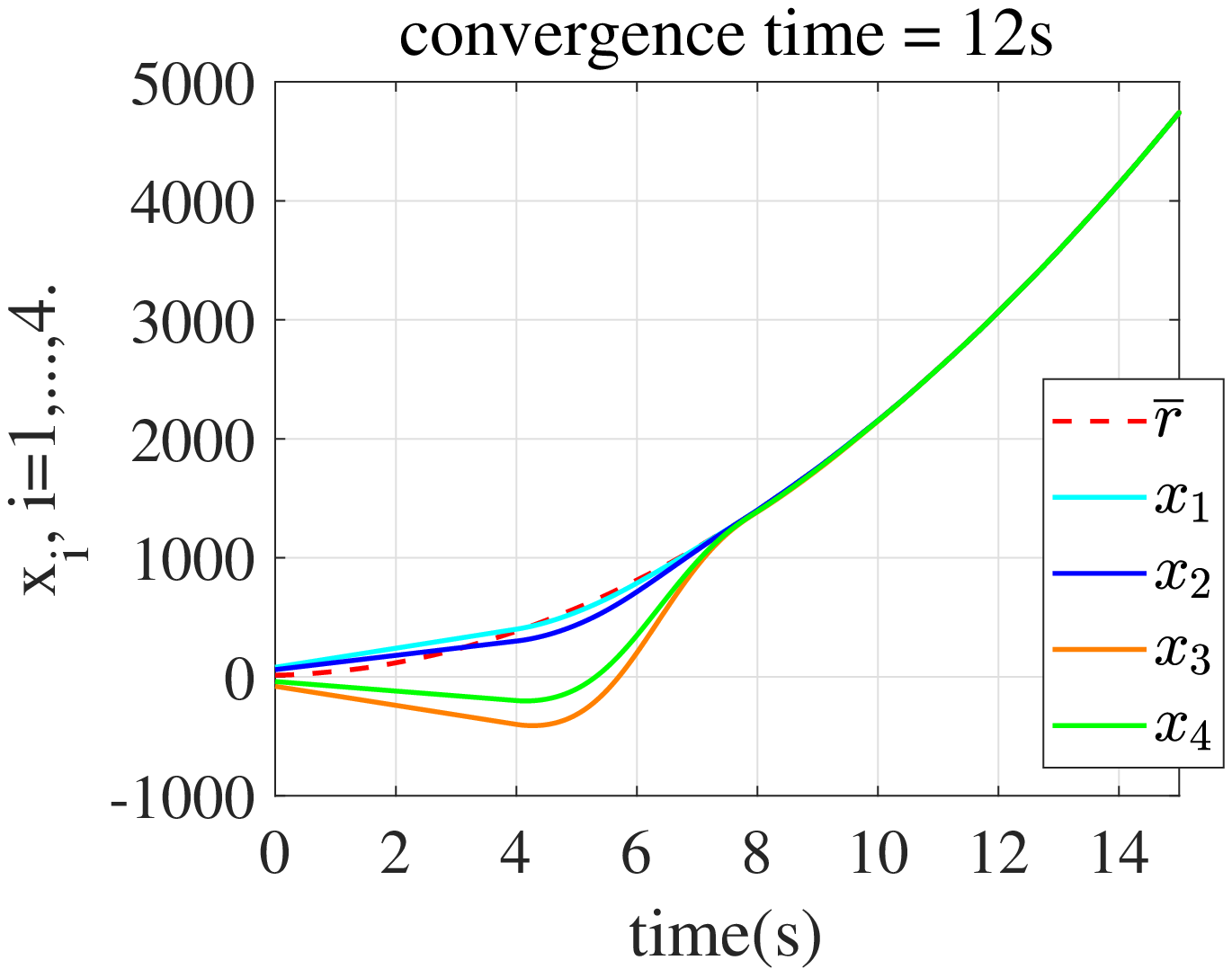}\\
		\caption{The positions of the agents in Example 4.} }
\end{figure}
\begin{figure}[htbp]
	\center{
		\includegraphics[width=6.6cm]{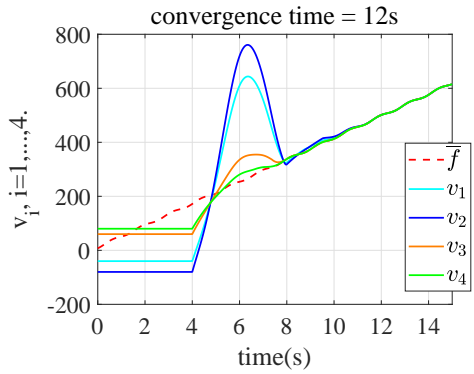}\\
		\caption{The velocities of the agents in Example 4.} }
\end{figure}
\section{Conclusions}\label{sec5}
In this paper, both the fixed-time distributed consensus tracking and the fixed-time distributed average tracking problems for double-integrator-type multi-agent systems are solved by using time-based generators. Different from traditional fixed-time methods, the time-based generator approach can directly predesign the fixed time, which is of great significance in reality. But the tradeoff is the introduce of time dependent function. Moreover, it is trivial to extend the fixed-time sliding mode control method in this article to Euler-Lagrange systems. By combining the fixed-time sliding mode control method of Euler-Lagrange systems and the observers in this article, the fixed-time distributed consensus tracking and distributed average tracking for multiple Euler-Lagrange systems can be achieved. Also, the fixed-time distributed consensus tracking problem for single-integrator multi-agent systems under directed graph can be solved by devising a controller similar to the velocity observer in (\ref{b40}).
\bibliographystyle{IEEEtran}

\end{document}